\newcommand*{\tpkt}{\rlap{$\;$.}}
\newcommand{\defsym}{\mathrel{:=}}
\newcommand{\m}[1]{\mathsf{#1}}
\newcommand{\mc}[1]{\mathcal{#1}}
\newcommand{\II}{\mc{I}}
\newcommand{\RR}{\mc R}
\newcommand{\OO}{\mc O}
\newcommand{\DD}{\mc D}
\renewcommand{\SS}{\mc S}
\renewcommand{\AA}{\mc A}                   
\newcommand{\TT}{\mc T}
\newcommand{\FF}{\mc F}
\newcommand{\MM}{\mc M}
\newcommand{\VV}{\mc V}
\newcommand{\UU}{\mc U}
\newcommand{\JJ}{\mc{J}}
\newcommand{\FFt}{\FF_{T}}
\newcommand{\FFtD}{\FF_{\DD}}
\newcommand{\FFl}{\FF_{L}}
\newcommand{\Var}{\mc Var}
\newcommand{\Varvec}{\vec{\mc Var}}
\newcommand{\Val}{\mc{V}\m{al}}
\newcommand{\Pos}{\mc{P}\m{os}}
\newcommand{\PosD}{\Pos_{\FFtD}}
\newcommand{\lterms}{\TT(\FFl, \VV)}
\newcommand{\judge}[2]{{\vdash\,}{#1}\colon{#2}}
\newcommand\cprobx[1][(t_0,\varphi_0)]{(#1, \DD, \RR)}
\newcommand\cprob{(s_0, \DD, \RR)}
\newcommand{\llangle}{\langle\!\langle}
\newcommand{\rrangle}{\rangle\!\rangle}
\newcommand{\init}{\m{init}}
\newcommand{\seq}[2][n]{{#2_1},\dots,{#2_{#1}}}
\renewcommand{\root}{\m{root}}
\newcommand\interp[2][\JJ]{[#2]_{#1}}
\newcommand{\bool}{\mathsf{{bool}}}
\newcommand{\ints}{\mathsf{{int}}}
\newcommand{\lists}{\mathsf{{list}}}
\newcommand{\true}{\mathsf{{true}}}
\newcommand{\false}{\mathsf{{false}}}
\newcommand{\equals}{\mathrel{\approx}}
\newcommand{\calc}{\m{calc}}
\newcommand{\tocalc}{\to_\calc}
\newcommand{\tocmr}[1][\RR]{\to_\m{\RR/calc}}
\newcommand{\cgt}[1]{>_{[#1]}}
\newcommand{\cge}[1]{\geqslant_{[#1]}}
\renewcommand{\vec}[1]{\overline{#1}}
\newcommand{\tvar}[2]{(#1, #2)}
\newcommand{\bv}[1]{\textsf{\#x{#1}}}
\newcommand{\rc}{\m{rc}}
\renewcommand{\dh}{\m{dh}}
\newcommand{\UB}{\textup{UB}}
\newcommand{\TV}{\textup{EV}}
\newcommand{\REG}{G}
\newcommand{\TVG}{G_\TV}
\newcommand{\RA}{T}
\newcommand{\SA}{S}
\newcommand{\CA}{C}
\newcommand{\pre}{\m{pre}}
\newcommand{\DT}{\m{DT}}
\newcommand{\mergeSort}{\mathsf{m}}
\newcommand{\cons}{\mathrel{::}}
\newcommand{\Nats}{\mathbb N}
\newcommand{\Ints}{\mathbb Z}
\renewcommand{\leq}{\leqslant} 
\renewcommand{\geq}{\geqslant}
\def\test#1#2#3{\setbox0=\hbox{$\vphantom{#1}^{#2}_{#3}$}%
                \dimen0=\wd0%
                \setbox1=\hbox{$\scriptstyle #2$}%
                \advance\dimen0-\wd1%
                \setbox1=\hbox{\hskip\dimen0\copy1}%
                \dimen0=\wd0%
                \setbox2=\hbox{$\scriptstyle #3$}%
                \advance\dimen0-\wd2%
                \setbox2=\hbox{\hskip\dimen0\copy2}%
                {\vphantom{#1}^{\box1}_{\box2}}{#1}
}
\newcommand{\subsumeseq}{\mathrel{\makebox[0pt]{\makebox[9pt][r]%
{\raise 0.9pt \hbox{$\cdot$}}}{\geq}}}
\newcommand{\ito}{\overset{\mathsf{i}}{\to}}
\newlength{\myline}
\newcommandx*{\triplearrow}[4][1=4, 2=4]{
  \draw[line width=\myline,shorten <=#1\myline,shorten >=#2\myline, double distance=3\myline,#3] #4;
  \draw[line width=\myline,shorten <=#1\myline,shorten >=#2\myline,#3] #4;
}
\newcommand{\xRightarrow}[2][]{\ext@arrow 0359\Rightarrowfill@{#1}{#2}}
\newcommand\TTTT{%
 \textsf{T\kern-0.15em\raisebox{-0.55ex}T\kern-0.15emT\kern-0.15em\raisebox{-0.55ex}2}%
}
\newcommand\aprove{\textsf{AProVE}\xspace}
\newcommand\koat{\textsf{KoAT}\xspace}
\newcommand\pubs{\textsf{PUBS}\xspace}
\newcommand\cofloco{\textsf{CoFloCo}\xspace}
\newcommand\tctlctrs{\TCT-\textsf{LCTRS}\xspace}
\newcommand{\TCT}{\textsf{$\textsf{T}\!\protect\raisebox{-1mm}{C}\!\textsf{T}\!$}\xspace}
\newcounter{commentcount}
\newcommand{\lemref}[1]{Lem.~\ref{lem:#1}}
\newcommand{\lemsref}[2]{Lems.~\ref{lem:#1} and \ref{lem:#2}}
\newcommand{\defref}[1]{Def.~\ref{def:#1}}
\newcommand{\secref}[1]{Sect.~\ref{sec:#1}}
\newcommand{\exaref}[1]{Ex.~\ref{exa:#1}}
\newcommand{\figref}[1]{Fig.~\ref{fig:#1}}
\newcommand{\tabref}[1]{Tab.~\ref{tab:#1}}
\renewcommand{\eqref}[1]{\ref{eq:#1}}
\newcommand{\etal}{\textsl{et al.}\xspace}
\newtheorem{numberedlemma}{Lemma}
\begin{document}

\title{Runtime Complexity Analysis of Logically Constrained Rewriting}
\titlerunning{Runtime Complexity Analysis of Logically Constrained Rewriting}
\author{Sarah Winkler 
and Georg Moser
}
\authorrunning{S. Winkler and G. Moser}
\institute{Free University of Bolzano, Italy, and University of Innsbruck, Austria\\
\email{sarwinkler@unibz.it,georg.moser@uibk.ac.at}}

\maketitle

\begin{abstract}
Logically constrained rewrite systems (LCTRSs) are a versatile and efficient 
rewriting formalism that can be used to model programs from various 
programming paradigms, as well as simplification systems in compilers and 
SMT solvers.
In this paper, we investigate techniques to analyse the worst-case runtime complexity of LCTRSs. For that, we exploit synergies between previously developed decomposition techniques for standard term rewriting by Avanzini \etal in conjunction with alternating time and size bound approximations for integer programs by Brockschmidt \etal and adapt these techniques suitably to LCTRSs.
Furthermore, we provide novel modularization techniques to exploit loop bounds from recurrence equations which yield sublinear bounds.
We have implemented the method in \TCT to test the viability of our method. 
\end{abstract}

\section{Introduction}
\label{sec:intro}

Rewriting with constraints over background theories is a highly 
versatile model of computation and tool for analysis.
While user-defined data types are modelled by free function symbols, arbitrary 
decidable theories can be incorporated, such as integer or bit-vector 
arithmetic, lists, or array theory. Constraints over these theories can be effectively handled by SMT solvers.
Different rewrite formalisms capture this idea~\cite{maude,FNSKS08,FGPSF09}. 
Here we use the recent notion of \emph{logically constrained term rewrite 
systems} (\emph{LCTRSs} for short), due to Kop et al.~\cite{KN13,ctrl,FKN17,CL18}.

LCTRSs can abstract programs in a variety of paradigms, comprising 
imperative, functional, and logic languages. They also subsume 
integer transition systems (ITSs), which constitute a frequently used
program abstraction~\cite{FGPSF09,FKS11,BEFFG16} but do---in contrast to 
LCTRSs---not support (non-tail) recursion.
On the other hand, LCTRSs can also model simplification routines for 
expressions, which are crucial procedures in compilers or SMT solvers.
For all of these application areas, LCTRSs offer a \emph{uniform} toolset 
to analyse \emph{termination} (or non-ter\-mination)~\cite{K13,NW18}, \emph{reachability}~\cite{CL18}, 
\emph{uniqueness}~\cite{WM18}, or \emph{program equivalence}~\cite{FKN17}.

However, techniques for resource analysis of LCTRSs are so far lacking. 
This is
despite the fact that in their application domains (program analysis,
simplification systems), execution time is crucial. As a remedy, this paper
investigates methods to analyse (worst-case) innermost runtime
complexity of logically constrained rewrite systems.
To this end, we unify and generalise the complexity framework for standard rewriting by 
Avanzini and Moser~\cite{AM16} with the approach by Brockschmidt \etal to
alternate time and size bound analysis for ITSs~\cite{BEFFG16}, and moreover 
propose processors for modularisation and sublinear bounds.

\paragraph{Contributions.} We present a novel resource analysis framework for logically constrained rewrite systems (\secref{framework})
coached in the modular processor framework of~\TCT~\cite{AMS16}. Precisely, 
\begin{enumerate}
\item we present the first fully-automated runtime complexity analysis of LTCRSs;
\item we unify the complexity framework for standard (innermost) rewriting by
Avanzini and Moser~\cite{AM16} and the alternating time and size bound approximations
for ITSs by Brockschmidt \etal~\cite{BEFFG16},
\item generalising this, we introduce a novel modularisation processor, the \emph{splitting processor}; 
\item we present a novel processor, dubbed \emph{recurrence processor} to derive sublinear bounds based on recurrences as described by the Master Theorem;
\item we illustrate the viability of our method by providing a prototype implementation as a dedicated module \texttt{tct-lctrs} in~\TCT, and evaluate it on ITS benchmarks.
\end{enumerate}

In the remainder of the section, we highlight potential application areas of 
LCTRSs to emphasise their versatility.
In the next section (\secref{mergesort}) we give a high-level account of our technical achievements, providing a step-by-step explanation
how the runtime complexity of a natural representation of \textsf{mergesort} can be optimally analysed
in our framework. In this section, we also discuss to what extent our results can be applied to the below given examples.
In~\secref{basics} we summarise the foundations of LCTRSs, while in~\secref{framework} we detail the complexity framework
used. Processors carried over from the ITS setting are presented in \secref{procs}, and the novel processors are introduced in~\secref{newprocs}. Implementational choices and experimental results are summarized in~\secref{implementation}.
Finally, in \secref{conclusion} we conclude.
Due to space restrictions, some proofs were moved to an appendix.

\paragraph{Logically Constrained Rewrite Systems.}
We emphasise motivational examples from three different domains, focusing 
on imperative and logic programs, as well as compiler optimisations.

\begin{example}
\label{exa:mergesort}
The following recursive ITS $\RR_1$, due to Albert et al.~\cite{AAGP08},
corresponds to an imperative \textsf{mergesort} implementation after computing
loop summaries. It is naturally coached into the LCTRS framework, with the theory of 
integers as background theory.
\begin{footnotesize}
\begin{xalignat*}{4}
(1)\!\!\!\! &&
  \init(x,y,z) &\to \mergeSort(x,y,z) &
(2)\!\!\!\! &&
  \mergeSort_3(x,y,z) &\to \m{merge}(y,z,z) \\
(3)\!\!\!\! &&
  \mergeSort_1(x,y,z) &\to \mergeSort(y,y,z) &
(4)\!\!\!\! &&
  \m{merge}(x,y,z) &\to \m{merge}(x - 1,y,z)\ [x \geq 1 \wedge y \geq 1] \\
(5)\!\!\!\! &&
  \mergeSort_0(x,y,z) &\to \m{split}(x,y,z) &
(6)\!\!\!\! &&
  \m{split}(x,y,z) &\to \m{split}(x - 2,y,z)\ [x \geq 2] \\
(7)\!\!\!\! &&
  \mergeSort_2(x,y,z) &\to \mergeSort(z,y,z) &
(8)\!\!\!\! &&
  \m{merge}(x,y,z) &\to \m{merge}(x,y - 1,z)\ [x \geq 1 \wedge y \geq 1] \\
(9)\!\!\!\! &&
  \mergeSort(x,y,z) &\to \rlap{$\langle \mergeSort_0(x,u,v),\mergeSort_1(x,u,v),\mergeSort_2(x,u,v),\mergeSort_3(x,u,v)\rangle$}\\&&& 
  \rlap{$[x \geq 2 \wedge u \geq 0 \wedge v \geq 0 \wedge x + 1 \geq 2u \wedge 2u \geq x \wedge x \geq 2v \wedge 2v + 1 \geq x]$}
\end{xalignat*}
\end{footnotesize}
Here a rule of the form $\ell \to r\ [c]$ means that an instance of $\ell$ is replaced by the respective instance of $r$ provided that the instance of $c$ is satisfied.
\end{example}
\noindent
Similarly, (constraint) logic programs can be nicely suited to LCTRSs.

\begin{example}
\label{exa:maxlength}

Consider the following simple Prolog program from the benchmarks collected by
Mesnard and Neumerkl~\cite{MN01}.%
\begin{footnotesize}
\texttt{
\begin{center}
\begin{tabular}{r@{\ }l@{\quad}@{\qquad}l}
max\_length(Ls,M,Len) &:- max1(Ls,0,M), len(Ls,Len). \\
len([H|T],L) & :- len(T,LT), L is LT + 1. &
len([],0). \\
max1([H|T],N,M) &:- H <= N, max1(T,N,M). &
max1([],M,M). \\
max1([H|T],N,M) &:- H > N, max1(T,H,M).
\end{tabular}
\end{center}}
\end{footnotesize}
\noindent
Assuming an instantiated list $\mathtt{Ls}$, \texttt{max\_length(Ls,M,Len)}
is deterministic and returns the maximal list entry and 
the length of the list. This function becomes representable as
the following LCTRS $\RR_2$ over the theory of integers and lists:
\begin{xalignat*}{2}
\m{max\_length}(ls,m,l) &\to \langle\m{max}(ls,\m 0,m), \m{len}(ls,l)\rangle\\
\m{len}(xs,l) & \to \m{len}(t,l-\m 1)~ [xs \approx h\cons t] &
\m{len}([],\m 0) &\to \langle\rangle \\
\m{max}(xs,n,m) &\to \m{max}(t,n,m)~[h \leq n\wedge xs \approx h\cons t]&
\m{max}([],m,m)&\to \langle\rangle \\
\m{max}(xs,n,m) &\to \m{max}(t,h,m) ~[h > n\wedge xs \approx h\cons t]
\end{xalignat*}
Here, $\cons$ denotes the cons operator and
$\langle \cdot, \cdot\rangle$, $\langle\rangle$ are additional constructor
symbols to collect the recursive calls of a rule.
Conceptually LCTRSs appear as a good fit to express \emph{constraint}
logic programs as well, making use of the
fact that constraints are natively supported.
\end{example}
\noindent
In order to emphasise that LCTRSs are not confined to 
static program analysis, we present a
final example which is concerned with program optimisation.
\begin{example}
\label{exa:alive}  
The Instcombine pass in the LLVM compilation suite
performs \emph{peephole optimisations} to simplify expressions
in the intermediate representation. The current optimisation set 
contains over 1000 simplification rules to e.g.\ replace multiplications
by shifts or perform bitwidth changes.
About 500 of them have recently been translated into the domain-specific
language \textsf{Alive}~\cite{LMNR18}, and  subsequently 
into LCTRSs~\cite{WM18}, resulting in rules of the following shape:
\begin{align*}
\m{add}(x,x) &\to \m{shift\_left}(x,\bv{1})\\
\m{add}(\m{add}(\m{xor}(\m{or}(x,a),y),\bv{1}),w)
&\to \m{sub}(w,\m{and}(x,b))~[ a \approx {\sim}b]\\
\m{add}(\m{xor}(x,a),z)
             &\to \m{sub}(a+z,x)~[\m{isPowerOf2}(a + \bv{1}) \land \dots]
               \tpkt
\end{align*}
These rules are expressed over the background theory of bit-vectors.
Naturally, as a compiler pass this simplification suite is a performance-%
critical routine, hence an automated complexity analysis is of great 
interest.
\end{example}

\section{Step by Step to an Optimal Bound}

\label{sec:mergesort}

Consider the rewrite system $\RR_1$ from \exaref{mergesort}, and
a rewrite sequence starting with an instance of $\init(x_0,y_0,z_0)$.
Below we sketch the steps to obtain an upper bound on the runtime complexity of $\RR_1$, expressed in $|x_0|$, $|y_0|$, and
$|z_0|$, where $|\cdot|$ denotes the absolute value.

An automated runtime complexity analysis of \textsf{mergesort} is notoriously difficult:
For this example, \cofloco~\cite{Montoya17,FM16} can only derive a quadratic bound, 
while \koat~\cite{BEFFG16} (as well as \aprove~\cite{Giesl:2017}) even proposes an exponential bound. \pubs~\cite{AAGP08} can produce an $\mathcal O(n\cdot \log(n))$ bound, using a special \emph{level-counting} feature, which however negatively affects its overall success rate.
Due to the work presented in this paper, our complexity analyser \TCT can automatically
prove the optimal $\mathcal O(n\cdot \log(n))$ upper bound. This is obtained by the following
recipe.
\begin{enumerate}
\item
We first compute \emph{dependency tuples} of all rules to focus the 
analysis on recursive calls (see \defref{dt}).
Then a \emph{dependency graph} approximation is computed
to estimate computation paths, where the numbers refer to the 
respective dependency tuples of rules in \exaref{mergesort}:
\begin{center}
\begin{tikzpicture}[yscale=.8, every loop/.style={min distance=3mm,in=-20,out=20,looseness=5}, node distance=9mm] 
\tikzstyle{to}=[->]
\tikzstyle{dt}=[inner sep=1pt, scale=.8]
\node[dt] (1) {(1)};
\node[right of=1, dt] (9) {(9)};
\node[above of=9, dt] (3) {(3)};
\node[below of=9, dt] (7) {(7)};
\node[right of=3, dt] (2) {(2)};
\node[right of=7, dt] (5) {(5)};
\node[right of=5, dt] (6) {(6)};
\node[right of=2, dt] (4) {(4)};
\node[below of=4, dt] (8) {(8)};
\draw[to] ($(1) + (-.6,0)$) -- (1);
\draw[to] (1) -- (9);
\draw[to] (9) to[bend left=15] (3);
\draw[to] (3) to[bend left=15] (9);
\draw[to] (9) to[bend left=15] (7);
\draw[to] (7) to[bend left=15] (9);
\draw[to] (9) -- (2);
\draw[to] (9) -- (5);
\draw[to] (2) -- (4);
\draw[to] (2) -- (8);
\draw[to] (8) to[bend left=15] (4);
\draw[to] (4) to[bend left=15] (8);
\draw[to] (8) to[loop right] (8);
\draw[to] (4) to[loop right] (4);
\draw[to] (5) -- (6);
\draw[to] (6) to[loop right] (6);
\end{tikzpicture}
\end{center}
\item
Next, we derive bounds on the \emph{size of variables} in left hand sides of rules, in
terms of the sizes of the variables in the initial term $\init(x_0,y_0,z_0)$.
For example, it is easy to check that for rule (9), $|x|$, $|y|$, and $|z|$ are bounded by $|x_0|$, $|y_0|$, and $|z_0|$, respectively, and all variables in other rules are bounded by $|x_0|$. This is established by the \emph{size bounds processor} (\lemref{size bounds}).
Formally, we adapt techniques developed for ITSs for that purpose~\cite{BEFFG16}.
\item 
We first derive time bounds for the SCCs $\{2,4,8\}$ and $\{6\}$ separately (\lemref{timebounds}).
Thus, using the size bounds from above and suitable \emph{interpretations}~\cite{AM16} (also called polynomial ranking functions~\cite{BEFFG16}) for LCTRSs, one can derive linear runtime bounds $2|x_0|+1$ and $|x_0|$ for these subproblems, respectively.
\item 
In order to analyse the SCC $\{3,7,9\}$, we first apply \emph{chaining}
to combine rule (9) with (3) and (7), respectively (eliminating symbols $\m m_1$ and $\m m_2$).
\item
  With respect to the modified rule (9) and the derived subproblem bounds, we exploit the \emph{loop processor} (\lemref{loops1}) to observe that its runtime can thus be overestimated by the following \emph{recurrence}
  equations.
\begin{align}
\label{eq:mergesort recursion}
  f(|x|,|y|,|z|) &= 2\cdot f(|x|/2,|x|/2,|x|/2) + 3|x|+1 &
  f(1, |y|, |z|) &= 0\quad
\end{align}
Solving the recurrences by the Master Theorem, implies an
overall runtime complexity of $\mathcal O(|x_0|\cdot \log(|x_0|))$ for $\RR_1$, as $|x|$ in rule (9) is bound by $|x_0|$.
\end{enumerate}

Wrt.\ $\RR_2$ from \exaref{maxlength}, we can fully automatically infer an
(asymptotic) optimal linear bound on the runtime complexity for the given instantiation.
Here, we take an instance of $\m{max\_length}(xs,z,l)$ as initial term.
As for comparison, note that the corresponding logic program cannot be handled by a dedicated variant of 
\aprove~\cite{GSSEF:2012} geared towards runtime complexity analysis of logic programs.
Only termination can be shown by the most recent version of \aprove~\cite{Giesl:2017}.
A priori, our approach is restricted to logic programs with 
instantiation patterns that ensure determinism and avoid failure, but in
the conclusion we discuss how to overcome this limitation.

Finally, \exaref{alive} cannot yet be handled, as a successful analysis requires the extension of the
proposed framework to \emph{(innermost) derivational complexity} {(i.e., 
the setting of arbitrary starting terms that may contain nested defined symbols).
This is subject to future work. However, we conceive
the work established in this paper as a solid first step towards the automated analysis of such systems.

\section{Logically Constrained Term Rewriting}
\label{sec:basics}

We assume familiarity with term rewriting~\cite{BN98,TeReSe},
but briefly recapitulate the notion of logically constrained
rewriting~\cite{KN13,FKN17} that our approach is based on.
We consider an infinite, sorted set of variables $\VV$ and a sorted signature
$\FF = \FFt \uplus \FFl$ such that $\TT(\FF,\VV)$ denotes the set of terms over 
this disjoint signature. Symbols in $\FFt$ are called \emph{term symbols}, while symbols in $\FFl$ are \emph{theory symbols}.
A term in $\TT(\FFl,\VV)$ is a \emph{theory term}.
For a non-variable term $t = f(\seq t)$, we write $\root(t)$ to obtain the
top-most symbol $f$.
A \emph{position} $p$ is an integer sequence used to identify subterms,
and the subterm of $t$ at position $p$ is denoted $t|_p$.
We write $\Pos(t)$ for the set of positions in a term $t$, and given a set
of function symbols $\FF'$, $\Pos_{\FF'}(t)$ are those positions 
$p \in \Pos(t)$ such that $t|_p$ is rooted by a symbol in $\FF'$.
A \emph{substitution} $\sigma$ is a mapping from variables to terms with 
finite domain, and
$t\sigma$ denotes the application of $\sigma$ to a term $t$.

Theory terms $\TT(\FFl,\VV)$ have a fixed semantics:
we assume a mapping $\II$ that assigns to every sort $\iota$ occurring
in $\FFl$ a carrier set $\II(\iota)$. Moreover, we assume that for every
element $a \in \II(\iota)$ there is exactly one constant 
symbol $c_a \in \FFl$, called a \emph{value}.
The set of all value symbols is denoted $\Val$.
For instance, if the sort of integers occurs in $\FFl$ then 
$\Val \subseteq \FFl$ contains a value $c_i$ for every $i\in \mathbb Z$.

Moreover, we assume a fixed interpretation $\JJ$ that
assigns to every theory symbol $f \in \FFl$ a function $f_\JJ$ of appropriate sort,
and such that $(c_a)_\JJ = a$ for value symbols $c_a$, i.e., value symbols
are interpreted as the represented element.
The interpretation $\JJ$ naturally extends to
theory terms without variables by setting 
$\interp{f(\seq t)} = f_\JJ(\interp{t_1},\dots,\interp{t_n})$.
In particular, we assume a sort $\bool$ such that
$\II(\bool) = \{ \top, \bot \}$ with values
$\Val_\bool = \{ \true, \false \}$ such that $\true_{\JJ} = \top$, and
$\false_{\JJ} = \bot$. We also assume that $\FFl$ contains equality
symbols $\equals_\iota$ for every theory sort $\iota$, and a symbol $\wedge$ 
interpreted as logical conjunction.

Theory terms of sort $\bool$ are called \emph{constraints}, and a 
\emph{constrained term} is a pair $(t,\varphi)$ of a term $t$ and a constraint $\varphi$.
A substitution $\gamma$ is a \emph{valuation} if its range is a subset of
$\Val$.
A constraint $\varphi$ is \emph{valid}, denoted $\models \varphi$, 
if $\interp{\varphi\gamma} = \top$  for all valuations $\gamma$, and
\emph{satisfiable} if $\interp{\varphi\gamma} = \top$ for some
valuation $\gamma$. We write $\psi \models \varphi$ if all valuations
that satisfy $\psi$ also satisfy $\varphi$.

\paragraph{Logically Constrained Rewriting.}
A constrained rewrite rule is a triple $\ell \to r~[\varphi]$
where $\ell, r \in \TT(\FF,\VV)$, $\ell \not\in\VV$,
$\varphi$ is a constraint,
and $\root(\ell) \in \FFt$.
If $\varphi = \true$ then the constraint is omitted.
For a rule $\rho\colon\ell \to r~[\varphi]$ we use $lhs(\rho) = \ell$
and $rhs(\rho)=r$ to denote its left- and right-hand sides, 
respectively.
A set of constrained rewrite rules is called a \emph{logically constrained
term rewrite system} (\emph{LCTRS} for short).
For an LCTRS $\RR$, its \emph{defined} symbols $\FFtD$ are all 
root symbols of left-hand sides, that is, 
$\FFtD = \{\root(\ell) \mid \ell \to r~[\varphi] \in \RR\}$.
In the remainder we assume that LCTRSs are left-linear, that is, all variables occur at most once in the left-hand side $\ell$ of a rule
$\ell \to r~[\varphi]$.\footnote{Non-left-linear rules are rare in
practice; and moreover repeated 
occurrences of a variable $x$ in $\ell$ can be substituted by a fresh variable $x'$, 
adding $x \approx x'$ to $\varphi$. Though this implies that
$x$ can only be substituted by theory terms in rewrite sequences, for 
innermost evaluation this is not a limitation.}
An LCTRS $\RR$ is a \emph{transition system} if all rules in $\RR$ are of
the form
$f(\seq \ell) \to g(\seq [m]r)~[\varphi]$ such that $f,g \in \FFt$, all
$\ell_i \in \VV$, and all $r_j$ are in $\lterms$; if moreover the background
theory associated with $\FFl$ is the theory of integers then $\RR$ is an
\emph{integer transition system} (ITS).

The fixed rewrite system $\RR_\calc$ is the (infinite) set of rules
$f(\seq{\ell}) \to u$ such that $f \in \FFl \setminus \Val$, 
$\ell_i \in \Val$ for all $1\, {\leqslant}\, i\, {\leqslant}\, n$, and
$u \in \Val$ is the value symbol of $\interp{f(\seq{\ell})}$.
A rewrite step using $\RR_\calc$ is called a \emph{calculation step}
and denoted $\tocalc$.
A \emph{rule step} $s \to_{\rho}^\sigma t$ 
using a rule $\rho\colon \ell \to r\:[\varphi]$ and substitution $\sigma$ 
satisfies $s = C[\ell\sigma]$, $t = C[r\sigma]$, and $\sigma$ respects $\varphi$;
where a substitution $\sigma$ is said to \emph{respect} a constraint
$\varphi$ if $\varphi\sigma$ is valid and $\sigma(x) \in \Val$ for
all $x \in \Var(\varphi)$.
The substitution in the notation $\to_{\rho}^\sigma$ is mostly omitted, 
and a rule step simply denoted $\to_\rho$. 
For an LCTRS $\RR$, we denote the relation
${\tocalc} \cup {\{ \to_\rho \}_{\rho \in \RR}}$ by $\to_\RR$.
The above rewrite step is \emph{innermost}, denoted 
$\smash{s\ito_{\rho} t}$, if all proper
subterms of $\ell\sigma$ are in normal form with respect to $\to_\RR$.
Given binary relations $R$ and $S$, we write $R/S$ for $S^* \cdot R \cdot S^*$.
For LCTRSs $\RR$ and $\SS$ we abbreviate $\smash{\ito_{\RR}/\ito_\SS}$ by $\smash{\ito_{\RR/\SS}}$,
and $\smash{\ito_{\RR}/\tocalc}$ by $\smash{\ito_{\RR/\calc}}$.

\begin{example}[continued from \exaref{maxlength}]
\label{exa:maxlength2}
\label{ex:1}
The LCTRS $\RR_2$ indicated in \exaref{maxlength},
expressing the predicate \texttt{max\_length}/3, makes use of
the sorts  $\ints$, $\lists$ and $\bool$.
Furthermore, $\FFl$ consist of symbols $\cons$ and $\m{[]}$ for lists, $\cdot$, $+$, $-$, $\leqslant$,  and $\geqslant$ as well as values $n$ for all  $n \in \Ints$, 
with the usual interpretations on $\Ints$ and lists of integers.
Then $\RR$ admits the following rewrite steps:
\begin{align*}
\m{len}([\m 1, \m 2],\m 2)
\to \m{len}([\m 2],\m 2 - \m 1)
\tocalc \m{len}([\m 2],\m 1)
\to \m{len}([],\m 1 - \m 1)
\tocalc \m{len}([],\m 0)
\end{align*}
\end{example}
Note that in LCTRS rewriting, calculation steps like the subtractions in \exaref{maxlength2} are explicit in the $\tocalc$ relation,
in contrast to ITSs or related formalisms~\cite{MS:2018}, where
simplification is implicit. 
Moreover, innermost rewriting is a rather natural restriction for LCTRSs:
By the definition of a rule step using some rule $\rho$, variables in 
the constraint of $\rho$ need to be substituted by values. Hence 
non-innermost steps are only possible if nested redexes occur below 
unconstrained variables.
For instance, in a term $\m{f}(\m{f}(\m 2))$ only the inner $\m{f}$ 
call constitutes a redex for the rule $\m{f}(x) \to x~[x > \m 0]$.

\paragraph{Algebras.}
We assume mappings $|\cdot|_\iota: \II(\iota) \to \Nats$ for every sort 
$\iota$, playing the role of norms to measure size.
For instance, one might take
the absolute values for integers, the size function for arrays,
and the unsigned integer value for bit-vectors.
The subscript $\iota$ in $|t|_\iota$ is omitted if the sort of $t$ 
is clear from the context.
\medskip

We consider well-founded algebras $\AA$ over the natural numbers and the Booleans, with 
interpretation functions $f^\AA$ for all $f \in \FFt \cup \FFl$, cf.~\cite{BN98,TeReSe}.
By $t^\AA$ we denote the interpretation of a term $t$ based on 
$\AA$, and by $[\alpha]_\AA(t)$ the interpretation of $t$ based on 
$\AA$ and valuation $\alpha$.
In order to bound complexity, we use algebras that incorporate the given
complexity measures:
\begin{definition}
A \emph{measure interpretation} is given by
an algebra $\MM$ with carrier $\Nats$, and measures $|\cdot|_\iota$ for all 
sorts $\iota$. The interpretation $t^\MM$ of a term $t$ is $|t|_\iota$ if $t\in \VV$ has sort $\iota$, and $f^\MM(t_1^\MM, \dots, t_m^\MM)$ if $t=f(\seq[m] t)$. 
In addition, we demand that 
$f^\MM([t_1]_\JJ^\MM,\dots,[t_n]_\JJ^\MM) \geq [f(t_1,\dots,t_n)]_\JJ^\MM$ for all values
$\seq t$.
\end{definition}

In the following we suit interpretations (aka ranking functions) to LCTRSs.
The ternary relation $\cgt{\cdot}^\MM$ is defined as $s \cgt{\varphi}^\MM t$ if and only if
$[\alpha]_\MM(s) > [\alpha]_\MM(t)$ is
satisfied for all valuations
$\alpha$ that respect $\varphi$.
Similarly, $s \cge{\varphi}^\MM t$ if and only if
$[\alpha]_\MM(s) \geq [\alpha]_\MM(t)$ holds for all valuations $\alpha$ that 
respect $\varphi$.

\begin{definition}
We call an LCTRS $\RR$ \emph{weakly compatible} with a measure interpretation
$\MM$ if 
$\ell \cge{\varphi}^\MM r$ for all $\ell \to r~[\varphi] \in \RR$, and
\emph{strictly compatible}
if $\RR$ is weakly compatible and in addition
$\ell \cgt{\varphi}^\MM r$ for some $\ell \to r~[\varphi]\in \RR$.
\end{definition}
\begin{example}
\label{exa:algebra}
Consider the measure 
interpretation $\MM$ such that $\mergeSort_3^\MM(x,y,z) = y$, 
$\m{merge}^\MM(x,y,z) = x$, $x +^\MM y = x +_{\mathbb N} y$, 
$x -^\MM y = \max(x -_{\mathbb N} y,0)$, $\geq^\MM$ is $\geq^\Nats$, and
$v^\MM = \max(v,0)$ for all $v \in \mathbb Z$.
The LCTRS $\RR'$ consisting of the rules (2), (4), and (8)
from \exaref{mergesort} is strictly compatible with $\MM$,
since the rules (2) and (8) are weakly decreasing, while (4) is
strictly decreasing.
\end{example}

\section{Complexity Framework}
\label{sec:framework}

An LCTRS $\RR$ is \emph{terminating} if $\to_\RR$ is well-founded.
In applications like static analysis, termination of a program is often not
enough and more precise resource guarantees are needed. In this section
we propose suitable runtime complexity notions for LCTRSs.

Following common notions in complexity analysis~\cite{AM16},
the \emph{derivation height} of a term $t$ wrt.\ a binary relation $\to$ is 
defined as follows:
$
\dh(t_0,\to) \defsym \sup\, \{ k \mid \exists \,t_1, \dots,t_k.\ t_0 \to \dots \to t_k\}
$.
We assume that an LCTRS $\RR$ is associated with a unique 
\emph{initial state} $(t_0, \varphi_0)$ such that $\varphi_0$ is a 
constraint and
$t_0 = \init(\vec x)$ is the
\emph{initial term}, for a vector of \emph{input variables} $\vec x = (\seq x)$ and a function symbol $\init$ that
does not occur on any right-hand side.
The intention is that we consider only rewrite sequences starting
at $t_0\sigma$, such that $\sigma$ is a valuation that respects $\varphi_0$. 
Sometimes $s_0$ will be used as a shorthand for $(t_0, \varphi_0)$.

For $\vec u, \vec v \in \Nats^k$, let $\vec u \leq_k \vec v$ 
abbreviate $\bigwedge_{i=1}^k u_i \leq v_i$.
Given $\vec t = (\seq[k]t)$, $|\vec t|$ denotes $(|t_1|, \dots, |t_k|)$,
and $\vec t\sigma$ denotes $(t_1\sigma, \dots, t_k\sigma)$ for any substitution $\sigma$.
For a term $t$, we write $\Varvec(t)$ for a vector containing $\Var(t)$ 
in a fixed order.

\begin{definition}
For an LCTRS $\RR$ and a constrained term $(t, \varphi)$
such that $\vec x = \Varvec(t)$,
the (innermost) \emph{runtime complexity} 
$\rc_{\RR}^{(t, \varphi)} \colon \Nats^n \to \Nats \cup \{\omega\}$ is defined as
\[
  \rc_{\RR}^{(t, \varphi)}(\vec m) = \sup\:\{ \dh(t\sigma,\ito_{\RR/\calc}) \mid  \:|\vec x\sigma| \leq_n \vec m\text{ for some $\sigma$ that respects }\varphi\}.
\]
\end{definition}
Thus, the runtime complexity of an LCTRS is the maximal number of 
innermost \emph{rule} steps
in a rewrite sequence that starts with a size-bounded instance of the 
initial state $(t, \varphi)$; 
calculation steps are not counted.
This is common in cost analysis, it also corresponds to the 
runtime complexity of a program or ITS~\cite{BEFFG16}, where
the number of transitions are counted but not 
simplifications of expressions.

\smallskip
\emph{Dependency pairs} are commonly used in termination and complexity analysis
of rewrite systems.
For termination of LCTRSs they were already used in earlier work~\cite{K13}.
For complexity analysis, stronger notions were developed for standard
rewriting: \emph{dependency tuples} (DTs)~\cite{NEG13},
\emph{weak}~\cite{HM08},
and \emph{grouped} dependency pairs~\cite{AM16}.
Since we consider innermost rewriting, we can use an LCTRS
variant of dependency tuples.
To that end, for every defined symbol $f$ we consider a fresh symbol
$f^\sharp$, and for a term $t = f(\seq t)$ write 
$t^\sharp$ to denote $f^\sharp(\seq t)$. 
\begin{definition}
\label{def:dt}
Consider a rule $\rho\colon \ell \to r ~[\varphi]$ such that
$\PosD(r)$ is sorted as $\seq[k] p$ with respect to a fixed order on positions.
Then the \emph{dependency tuple} $\DT(\rho)$ of $\rho$ is
the constrained rule $\ell^\# \to \langle (r|_{p_1})^\#, \dots, (r|_{p_k})^\#\rangle_k~[\varphi]$.
For an LCTRS $\RR$, $\DT(\RR) = \bigcup_{\rho \in \RR} \DT(\rho)$.
\end{definition}

Here $\langle \dots\rangle_k$ is a fresh tuple symbol for every arity $k$
(but the subscript will be dropped for simplicity).
\begin{definition}[Dependency Graph]
Let $\RR$ be an LCTRS and $\DD\subseteq \DT(\RR)$.
The \emph{dependency graph} (DG) is the directed graph with node set 
$\DD$ and edges from $s^\# \to \langle \seq{t^\#}\rangle~[\varphi]$ to $u^\# \to v~[\psi]$
if there is some $t_i^\#$ such that $t_i^\#\sigma \to_\RR^* u^\#\tau$,
for some substitutions $\sigma$ and $\tau$ and some $i$, $1 \leq i \leq n$.
\end{definition}
The DG is not computable in general, but approximation techniques
are well-known~\cite{K13,NEG13,AMS16,Giesl:2017}.
For instance, the graph in \secref{mergesort} constitutes a dependency
graph approximation for the LCTRS from \exaref{mergesort}.
Following Noschiniski \etal~\cite{NEG13}, we assume particular interpretation 
functions for the tuple operators $\langle \dots\rangle$.
To this end, let a \emph{DT-measure interpretation} $\MM$ be a measure 
interpretation that interprets 
$\langle\seq[k]t\rangle^\MM = t_1 + \dots + t_k$, for all $k \geq 0$.
\smallskip

Let the set of \emph{bound expressions} $\UB$ be inductively
defined as follows: 
(i) $|x|_\iota \in \UB$ for $x\in\VV$ of sort $\iota$,
(ii) $\mathbb Z \subseteq \UB$ and $\omega \in \UB$,
(iii) if $p,q \in \UB$ then $p+q$, $pq$, and $max(p,q)$ are in $\UB$, and
(iv) if $p \in \UB$ and $k\in \Nats$ then $k^p$, $p/k$, and $\log_k(p)$ are in $\UB$. 
Given $p,q\in\UB$, we write $p \leq q$ if $[\alpha]_\Nats(p) \leq [\alpha]_\Nats(q)$ for all 
substitutions $\alpha\colon \VV \to \Nats$.
For a bound expression $p\in \UB$ and $\vec m\in \Nats^n$ we also write 
$p(\vec m)$ to denote the substituted bound expression $p[m_i/x_i]_{1\leq i\leq n}$,
assuming  $\vec x\in \VV^n$ are the variables in the initial term $t_0 = \init(\vec x)$.
\smallskip

A triple $P=((t,\varphi),\DD,\RR)$ of a constrained term $(t,\varphi)$, 
a set of DTs $\DD$, and an LCTRS $\RR$ is called a \emph{(complexity) problem}.
Following Brockschmidt \etal~\cite{BEFFG16}, we next define time and size bound approximations.

\begin{definition}
For a complexity problem $((t,\varphi),\DD,\RR)$ with $\vec x=\Varvec(t)$, 
a function $\RA \colon \DD \to \UB$ is a \emph{runtime approximation} if,
for all $\rho \in \DD$ and $\vec m \in \Nats^n$,
\[\RA(\rho)(\vec m) \geq \sup\:\{ \dh(t\sigma,\ito_{\{\rho\}/\DD\cup\RR}) \mid 
|\vec x\sigma| \leq_n \vec m\text{ and $\sigma$ respects }\varphi\}.\]
\end{definition}
In words, a runtime approximation $T(\rho)$ over-approximates how often a
DT $\rho \in \DD$ can be used in a rewrite sequence starting from the initial state, expressed in terms of the input variables.
For instance, consider \exaref{mergesort} and let $(1^\#)$ be the DT corresponding
to rule $(1)$. Then the function $\RA$ such that 
$\RA(1^{\#}) = 1$ and $\RA(\rho)(|x_0|,|y_0|,|z_0|) = |x_0|^2$ 
for all other DTs $\rho\in\DD$ 
is a valid (though not optimal) runtime approximation.

For a complexity problem $((t,\varphi),\DD,\RR)$,
the set of \emph{entry variables} $\TV$ is the set of all
tuples $\tvar{\rho}{y}$ such that 
$\rho \in \DD$ and $y \in \Var(lhs(\rho))$.
\begin{definition}
For a complexity problem $((t,\varphi),\DD,\RR)$ with $\vec x=\Varvec(t)$, 
a function $\SA \colon \TV \to \UB$ is a \emph{size approximation} if
\[
\SA\tvar{\rho}{y}(\vec m) \geq \sup\:\{ |y\tau| \mid \exists \sigma,\:u.\ 
t\sigma \ito_{\RR\cup \DD}^{*} \cdot \ito_{\rho}^{\tau} u
\text{, }|\vec x\sigma| \leq_n \vec m
\}
\]
for $\tvar{\rho}{y} \in \TV$ such that substitution
$\sigma$ respects $\varphi$,
and $\vec m \in \Nats^n$.
\end{definition}
A size approximation over-approximates how large a
variable in the left-hand side of a rule
in $\DD$ can get in a rewrite sequence from the initial state,
again expressed in terms of the input variables.
A tuple $(\RA, \SA)$ is a \emph{bound approximation} for 
a complexity problem $P$ if $\RA$ and $\SA$ are runtime and size
approximations for $P$.
We next define a complexity framework in the spirit of Avanzini and 
Moser~\cite{AM16}.
\begin{definition}
Given a complexity problem $P=(s_0,\DD,\RR)$, a \emph{(complexity) judgement} is a statement 
$\judge{P}{(\RA, \SA)}$, for functions $\RA:\DD \to \UB$ and
$\SA: \TV \to \UB$.
The judgement is \emph{valid} if $(\RA, \SA)$ is a bound
approximation for $P$.
A \emph{complexity processor} is an inference rule on complexity
judgements of the following form:
\[
\frac{\judge{P_1}{(\RA_1,\SA_1)}, \dots, \judge{P_k}{(\RA_k,\SA_k)}}
{\judge{P}{(\RA,\SA)}} \quad\mathsf{Proc}
\]
and it is \emph{sound}
if $\judge{P}{(\RA, \SA)}$ is valid whenever all $\judge{P_i}{(\RA_i, \SA_i)}$ 
are valid. 
\end{definition}
\noindent
For a problem $P = (s_0,\DD,\RR)$ with initial state $s_0 = (\init(\vec x), \varphi)$, 
a DT $\ell\to r~[\psi]\in \DD$ is \emph{initial} if $\root(\ell) = \init^\#$.
The \emph{initial processor} for $P$ is given by
\[
\frac{}
{\judge{P}{(\RA, \SA_\omega)}} \quad\mathsf{Initial}
\]
where $\RA(\rho) = 1$ if $\rho$ is initial and $\RA(\rho) = \omega$ otherwise; and
$\SA_\omega\tvar{\rho}{x} = \omega$ for all $\tvar{\rho}{x} \in \TV$.
Since $\init^\#$ does not occur on any right-hand side by assumption,
the processor \textsf{Initial} is sound. For instance, the DT
$\init^\#(x,y,z) \to \mergeSort^\#(x,y,z)$ originating from rule (1)
in \exaref{mergesort} is initial.
For a problem $P = (s_0,\DD,\RR)$ and an expression $\CA\in \UB$, we sometimes write
$\judge{P}{((\CA)_\Sigma, \SA)}$ to express that there is a 
runtime approximation $\RA$ such that $\judge{P}{(\RA, \SA)}$  and
$\CA = \sum_{\rho \in \DD} \RA(\rho)$.

The next result states that valid judgements bound the runtime complexity
of LCTRSs. It can be proven in a similar way as \cite[Theorem 6]{AM16}, using
the properties of dependency tuples for innermost rewriting.
\begin{theorem}
\label{thm:dt}
If an LCTRS $\RR$ with initial state $(t,\varphi)$ admits the valid judgement
$\judge{((t^\#,\varphi), \DT(\RR), \RR\cup\RR_\calc)}{(\RA,\SA)}$
then 
$\rc^{(t,\varphi)}_{\RR} \leq \sum_{\rho \in \DT(\RR)} T(\rho)$ holds.
\end{theorem}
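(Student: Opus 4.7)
The plan is to adapt the standard soundness argument for innermost dependency tuples (cf.\ \cite{AM16,NEG13}) to the LCTRS setting. The core is a simulation lemma stating that every innermost derivation $t\sigma \ito_{\RR/\calc}^{k} s$, whose starting substitution $\sigma$ respects $\varphi$ and satisfies $|\vec x\sigma|\leq_n \vec m$, can be mirrored by a derivation $t^\#\sigma \ito_{\DT(\RR)\cup\RR\cup\RR_\calc}^{*} s'$ in which each rule step using $\rho$ is matched by exactly one $\DT(\rho)$-step. Granting this simulation, validity of the judgement bounds the number of $\DT(\rho)$-steps in any such derivation by $T(\rho)(\vec m)$, and summing over $\rho$ yields $k \leq \sum_{\rho\in\DT(\RR)} T(\rho)(\vec m)$. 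Taking the supremum over all admissible $\sigma$ then gives the claim.

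I would prove the simulation lemma by induction on $k$, maintaining the invariant that at each stage the accumulated DT derivation produces a term whose marked subterms match, up to intermediate calculation, the pending recursive calls of the original derivation. In the induction step, consider the first rule step $t\sigma \ito_\rho u \ito_{\RR/\calc}^{k-1} s$ at position $p$, using $\rho\colon \ell \to r\,[\varphi']$ with matching substitution $\tau$. By definition of a rule step $\tau$ respects $\varphi'$, so the same $\tau$ is admissible for $\DT(\rho)$, which shares the constraint. By the invariant a marked occurrence of $\ell^\#\tau$ is available in the DT derivation; applying $\DT(\rho)$ there produces a tuple exposing exactly the marked recursive calls of $r\tau$, which by construction are the new pending recursive calls. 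Any calculation steps performed in the original derivation between rule steps are mirrored by (uncounted) calculation or $\RR$-steps in the DT derivation beneath the marks, and the induction hypothesis dispatches the remaining $k-1$ rule steps.

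The remaining bookkeeping is immediate. For the simulated derivation $D'$, validity of the runtime approximation gives $N_\rho(D') \leq T(\rho)(\vec m)$, where $N_\rho(D')$ counts $\DT(\rho)$-steps in $D'$; summing over $\rho$ bounds the total number of DT steps---and hence $k$---by $\sum_{\rho\in\DT(\RR)} T(\rho)(\vec m)$, whence $\rc_\RR^{(t,\varphi)}(\vec m) \leq \sum_{\rho\in\DT(\RR)} T(\rho)(\vec m)$ after the supremum is taken. The main obstacle is the simulation lemma itself, where two LCTRS-specific points require care: that $\DT(\rho)$ faithfully reproduces the constraint handling of $\rho$ (so that the same matching substitution can be reused), and that the calculation steps needed to normalise defined-symbol arguments in the source derivation admit matching witnesses beneath the marks in the DT derivation without breaking the pending-redex invariant. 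Both are essentially routine adaptations of the free-theory case treated in \cite{AM16}, but they have to be spelled out to justify that the counted $\RR$-rule steps and the counted $\DT(\RR)$-steps in the two derivations are in bijection.
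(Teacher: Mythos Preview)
Your proposal is correct and follows exactly the route the paper indicates: the paper does not spell out a proof but simply defers to \cite[Theorem~6]{AM16}, noting that the argument carries over via the properties of dependency tuples for innermost rewriting. Your simulation lemma and the subsequent counting argument are precisely the adaptation of that proof to the LCTRS setting, and the two LCTRS-specific points you flag (constraint preservation under $\DT(\cdot)$ and handling of $\RR_\calc$-steps beneath marked positions) are the only places where extra care beyond the plain-TRS case is needed.
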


\section{Processors}
\label{sec:procs}

This section presents processors that implement
the complexity framework from \secref{framework}, in particular showing how
the respective ITS techniques~\cite{BEFFG16} carry over.

\subsubsection*{Interpretation Processors.}
Compatible interpretations are a standard
tool in resource analysis, cf.~\cite{NEG13,AM16,BEFFG16}.
We first present a processor using a measure interpretation that orients \emph{all} rules and DTs (cf. \cite[Theorem 3.6]{BEFFG16}).
For $p\in \UB$, let $[p]$ denote the bound expression obtained from $p$ by replacing all
coefficients in $p$ by their absolute values (such that the resulting expression is weakly monotone).

\begin{lemma}
\label{lem:interpretation}
Let $P=((t_0,\varphi_0),\DD,\RR)$ and $\MM$ 
a $\DT$-measure interpretation with which $\RR$ is weakly, and $\DD$ is strictly compatible. Then the following processor is sound,
where $\RA'(\rho) = [(t_0)^\MM]$ for all $\rho\in \DD_{>}$,
and $\RA'(\rho) = \RA(\rho)$ otherwise:
\[
\frac{\judge{P}{(\RA, \SA)}}
{\judge{P}{(\RA', \SA)}} \quad\mathsf{Interpretation}
\]
\end{lemma}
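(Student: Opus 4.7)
The approach is to retain $\SA$ unchanged and argue that the modified runtime approximation $\RA'$ is valid. Since the premise $\judge{P}{(\RA,\SA)}$ asserts that $\SA$ is a size approximation, the size component of the conclusion holds immediately. For the runtime part, the case $\rho\notin\DD_>$ is inherited from $\RA$ because $\RA'(\rho)=\RA(\rho)$, so the only work required is to show
\[
[(t_0)^\MM](\vec m) \,\geq\, \dh(t_0\sigma,\ito_{\{\rho\}/\DD\cup\RR})
\]
for every $\rho\in\DD_>$, every $\vec m\in\Nats^n$, and every substitution $\sigma$ that respects $\varphi_0$ with $|\vec x\sigma|\leq_n \vec m$.

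First I fix such a $\rho$ and $\sigma$ and consider an arbitrary rewrite sequence $t_0\sigma = u_0 \ito u_1 \ito u_2 \ito \cdots$ in $\ito_{\{\rho\}/\DD\cup\RR}$. An $\ito_{\DD\cup\RR}$-step applies some rule $\ell'\to r'\,[\psi]$ via a substitution respecting $\psi$; weak compatibility gives $\ell' \cge{\psi}^\MM r'$, and weak monotonicity of $\MM$ in its arguments lifts this to $u_i^\MM \geq u_{i+1}^\MM$ in context. An $\ito_\rho$-step uses a strictly compatible rule, so analogously $u_i^\MM \geq u_{i+1}^\MM + 1$. Because all $\MM$-values lie in $\Nats$, the total number of $\rho$-steps occurring in the sequence is bounded by $u_0^\MM = (t_0\sigma)^\MM$.

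It remains to bound $(t_0\sigma)^\MM$ by $[(t_0)^\MM](\vec m)$. By definition of $\MM$ on terms, $(t_0\sigma)^\MM$ is obtained from $(t_0)^\MM$ by substituting $|x_i\sigma|$ for each occurrence of $|x_i|$. Since the resulting value is a natural number, replacing every negative coefficient in $(t_0)^\MM$ by its absolute value can only enlarge it, giving $(t_0\sigma)^\MM \leq [(t_0)^\MM](|\vec x\sigma|)$. The bracketed expression is weakly monotone by construction, so the assumption $|\vec x\sigma| \leq_n \vec m$ yields $[(t_0)^\MM](|\vec x\sigma|) \leq [(t_0)^\MM](\vec m)$, which closes the argument.

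I expect the main obstacle to be the compositional step that lifts the root-level inequalities $\cge{\psi}^\MM$ and $\cgt{\psi}^\MM$ to a decrease under the arbitrary contexts that appear during innermost evaluation; this hinges on the weak monotonicity of measure interpretations in all arguments and on the fact that a respecting substitution instantiates every constraint variable by a value, so the valuation that witnesses the compatibility inequality is precisely the one produced by the rewrite step. A subsidiary technicality is the bookkeeping of signs while passing from $(t_0)^\MM$ to $[(t_0)^\MM]$, but once the monotonicity ingredient is in place the remaining calculation is routine.
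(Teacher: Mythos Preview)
Your proposal is correct and follows essentially the same route as the paper's proof: fix a strictly oriented $\rho$, track the $\MM$-value along an arbitrary derivation, and use weak/strict compatibility to bound the number of $\rho$-steps by the interpretation of the start term. If anything, you are more explicit than the paper about two points it leaves implicit---the context-closure step (via weak monotonicity of the interpretation functions) and the passage from $(t_0\sigma)^\MM$ to $[(t_0)^\MM](\vec m)$ via the absolute-value bracket and the size bound $|\vec x\sigma|\leq_n\vec m$.
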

For instance, for \exaref{mergesort} one can take the interpretation 
$\MM$ such that $\m{split}^\MM = 0$ and $f^\MM = 1$ for all 
other $f\in \FFt$, and symbols in $\FFl$ are interpreted as in \exaref{algebra}.
$\RR_1$ is strictly compatible since all rules are weakly and rule (5) is strictly decreasing. This justifies a runtime approximation setting
by $\RA(5^\#) = 1 = \init^\#(\vec x)^\MM$.

Next, we adapt \cite[Theorem 3.6]{BEFFG16} to our setting, by which
runtime bounds can be obtained using an interpretation 
that orients the given LCTRS \emph{partially}.
For a dependency graph $\REG$ and some $\DD' \subseteq \DD$, let
$\pre(\DD')$ be the set of all edges $(\rho_1, \rho_2)$ in $\REG$, such that $\rho_1 \in \DD \setminus \DD'$ and 
$\rho_2 \in \DD'$.
Moreover, for a DT $\rho$ with $\Varvec(lhs(\rho)) = (y_{1}, \dots, y_{k})$, let 
$\vec \SA_{\rho}$ denote $(\SA\tvar{\rho}{y_{1}}, \dots, \SA\tvar{\rho}{y_{k}})$.

\begin{lemma}
\label{lem:timebounds}
Suppose $P=\cprob$ is a complexity problem such that
$\DD' \subseteq \DD$ has no initial DTs,
$\RR$ is weakly, and $\DD'$ is strictly compatible with a
$\DT$-measure
interpretation $\MM$. Then the following processor is sound:
\[
\frac{\judge{P}{(\RA, \SA)}}
{\judge{P}{(\lambda\rho.
\left\{\begin{array}{ll}\sum_{(\gamma, \delta) \in \pre(\DD')} \RA(\gamma) \cdot
[lhs(\delta)^\MM](\vec \SA_{\delta}) &\text{ if }\rho \in \DD'_>\\
\RA(\rho)
&\text{ otherwise}\end{array}\right\}, \SA)}} \quad\mathsf{TimeBounds}
\]
where $\DD'_>$ is the set of rules $\ell \to r~[\varphi]$ in $\DD'$
such that $\ell \cgt{\varphi}^\MM r$.
\end{lemma}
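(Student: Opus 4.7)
The plan is to verify the updated runtime approximation $\RA'$ is valid; for $\rho \notin \DD'_>$ nothing changes, so fix some $\rho \in \DD'_>$, a vector $\vec m \in \Nats^n$, and a substitution $\sigma$ respecting $\varphi_0$ with $|\vec x\sigma| \leq_n \vec m$. Consider an arbitrary innermost reduction starting from $t_0\sigma$ via $\ito_{\DD \cup \RR}$, and bound the number of $\rho$-steps it contains.

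First, partition the DT-applications in the sequence into maximal \emph{runs}, where a run is a consecutive block of DT-steps from $\DD'$, interspersed with $\RR$- and $\tocalc$-steps. Since by assumption $\DD'$ contains no initial DTs, every run is preceded by a DT-step using some $\gamma \in \DD \setminus \DD'$; writing $\delta \in \DD'$ for the DT opening the run, the dependency graph necessarily contains an edge $(\gamma, \delta)$, so every run is labelled by a pair $(\gamma, \delta) \in \pre(\DD')$. Every $\rho$-step lies in exactly one run, so it suffices to count $\rho$-steps run by run.

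Second, bound the number of $\rho$-steps within a single run. By weak compatibility of $\RR$ with $\MM$ every $\RR$-step is weakly $\MM$-decreasing; the same holds for every $\tocalc$-step by the measure-interpretation condition $f^\MM([t_1]_\JJ^\MM, \dots, [t_n]_\JJ^\MM) \geq [f(t_1, \dots, t_n)]_\JJ^\MM$ on values. Combined with weak compatibility of $\DD$ and strict compatibility of $\DD'_>$, this shows that only $\rho$-steps strictly decrease $\MM$, while no other step increases it. Hence the number of $\rho$-steps in a run is at most the $\MM$-value at the start of the run, which equals the $\MM$-value of the redex $lhs(\delta)\tau$ of the opening $\delta$-step. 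By the size-approximation property, $|y\tau| \leq \SA\tvar{\delta}{y}(\vec m)$ for every $y \in \Var(lhs(\delta))$; and since the bound expression $[lhs(\delta)^\MM]$ is by construction weakly monotone (which is exactly the purpose of the $[\cdot]$ operator replacing coefficients by absolute values), the value of $lhs(\delta)\tau$ under $\MM$ is at most $[lhs(\delta)^\MM](\vec \SA_\delta)(\vec m)$.

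Finally, the number of runs entered through a fixed pair $(\gamma, \delta)$ is at most the number of $\gamma$-applications in the reduction, which by validity of the input judgement $\judge{P}{(\RA, \SA)}$ is at most $\RA(\gamma)(\vec m)$. Summing over all $(\gamma, \delta) \in \pre(\DD')$ yields the claimed bound. The main obstacle will be the second step, specifically the localisation argument that lets us identify the "global" $\MM$-value with the $\MM$-value of the single sharped subterm on which $\delta$ fires; this exploits the fact that a $\DT$-measure interpretation interprets the tuple operator as a sum, so that the subcomputations arising from independent dependency-tuple children are additive, the strict decrease caused by a $\rho$-step only affects one summand, and weak monotonicity of the other summands preserves the inequality across the remaining steps of the run.
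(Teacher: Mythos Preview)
Your overall plan matches the paper's proof closely: partition the derivation into maximal $\DD'$-runs, bound the $\rho$-steps in each run via $\MM$, bound the number of runs entered through $(\gamma,\delta)$ by $\RA(\gamma)$, and sum. The place where both your sketch and the paper's argument leave a gap is precisely the step you flag as the main obstacle: passing from the global $\MM$-value at the start of a run to $[lhs(\delta)^\MM](\vec\SA_\delta)$. Your additivity observation only shows that the $\MM$-value of the \emph{whole} term at the start of a run bounds the number of $\rho$-steps in that run; it does not show that this global value is in turn bounded by the $\MM$-value of the \emph{single} sharped subterm on which the opening $\delta$ fires, and the fact that the tuple operator is interpreted as a sum works against you here, since the other $\DD'$-children contribute additional summands that are simply dropped.

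This localisation can in fact fail when the preceding $(\DD\setminus\DD')$-step branches into several $\DD'$-entries. Take $\gamma\colon g^\#(x,y)\to\langle f^\#(x),f^\#(y)\rangle$ in $\DD\setminus\DD'$ and $\DD'=\{\delta\}$ with $\delta\colon f^\#(x)\to\langle f^\#(x{-}1)\rangle\ [x>0]$, interpreted by $f^{\#\MM}(x)=|x|$. One application of $\gamma$ opens a single run (in your sense) containing $|x_0|+|y_0|$ many $\delta$-steps, yet $\RA(\gamma)=1$, $\pre(\DD')=\{(\gamma,\delta)\}$, and $\SA(\delta,x)=\max(|x_0|,|y_0|)$ is a valid size approximation, so the product $\RA(\gamma)\cdot[lhs(\delta)^\MM](\vec\SA_\delta)=\max(|x_0|,|y_0|)$ undercounts. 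The paper's proof makes the same jump (``hence the number of $\rho$-steps \dots\ is bounded by $N:=lhs(\delta)^\MM(\vec\SA_\delta)$'') without further justification. To close the gap one has to move to a tree-shaped notion of run so that each $\DD'$-child produced by a $\gamma$-application opens its own run, and then bound the number of such edge traversals---for instance by $\RA(\gamma)$ times the number of positions in $rhs(\gamma)$ that can match $lhs(\delta)$---rather than by the number of $\gamma$-applications alone.
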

\noindent
Next, we define a proceessor to compute size approximations.

\subsubsection*{Size Bounds.}
Size approximations were developed for ITSs and tend to be less precise for LCTRSs
due to nested terms. 
However, in many practical examples, 
a sufficient approximation is feasible. Next, we thus adapt the relevant notions to the LCTRS setting.
First, the \emph{local size approximation} overapproximates the size of entry
variables in terms of variable sizes in predecessor rules.
\begin{definition}
For $\delta,\rho \in \DD$ and $\tvar\rho{y} \in \TV$, let $\SA_{\delta \to \rho}\colon \VV \to \UB$ be a \emph{local size approximation} if
\[
\SA_{\delta \to \rho}(y)(\vec m) \geq \sup\:\{ |y\tau| \mid \exists t,\sigma
.\ \ell\sigma \to_{\delta}^{\sigma} \cdot \to_{\rho}^{\tau} t
\text{ and }\vec z\sigma\leq_n \vec m
\}
\]
where $\ell = lhs(\delta)$, $\vec z = \Varvec(\ell)$, and
$\sigma$ is a valuation.
\end{definition}
The intention is that for an entry variable $(\rho,y)$, such that $y$ occurs in the left-hand side of $\rho$, the expression $\SA_{\delta \to \rho}(y)$ upper-bounds $y$ in terms of the variables in $\delta$, for the case where
$\rho$ is applied after $\delta$. While such an expression is not 
always computable, it can often be over-approximated.
For instance, in \exaref{mergesort} a local size approximation
$S_{(9)\to(2)}(y)$ could be $(|x|+1)/2$ or $|x|$:
the subterm $\smash{\m m_3^\#(x,u,v)}$
on the right-hand side of (9) matches the left-hand side of $(2)$,
instantiating the variable $y$ by $u$,
and the side condition of (9) ensures $x+1 \geq 2u$.
We next define the \emph{entry variable graph} to track the dependence of
entry variables on each other.
For $f\in \UB$, let $\Var(f)$ be the 
set of all variables occurring in $f$.\footnote{For more precision one could restrict to \emph{active} variables, as done in~\cite{BEFFG16}.}

\begin{definition}
An \emph{entry variable graph} $\TVG$ for
$\cprob$ with DG $\REG$ has node set $\TV(\DD)$, 
and there is an edge from $\tvar{\delta}z$ to 
$\tvar{\rho}y$ labeled $\SA_{\delta \to \rho}(y)$ if
$\REG$ has an edge from $\delta$ to $\rho$ and 
$z \in \Var(\SA_{\delta \to \rho}(y))$.
\end{definition}
We illustrate the concept on our running example.
\begin{example}
\label{exa:mergesort2}
Consider again \exaref{mergesort}. We first apply \emph{chaining}, a standard technique
in termination an complexity analysis~\cite{FKS11,AMS16}, to compress the cycles 
$(9) - (3) - (9)$ and $(9) - (7) - (9)$ into single-step cycles, such that 
(9) is replaced by
\begin{footnotesize}
\begin{align*}
\mergeSort(x,y,z) &\to \langle \mergeSort_0(x,u,v),\mergeSort(u,u,v),\mergeSort(v,u,v),\mergeSort_3(x,u,v)\rangle \ [\psi] \\
\psi &= x \geq 2 \wedge u \geq 0 \wedge v \geq 0 \wedge x + 1 \geq 2u \wedge 2u \geq x \wedge x \geq 2v \wedge 2v + 1 \geq x.
\end{align*}
\end{footnotesize}
Then we obtain the following entry variable graph:
\begin{center}
\begin{tikzpicture}[align=center, xscale=2, node distance=15mm,] 
\tikzstyle{to}=[->,shorten <=1mm,shorten >=1mm]
\tikzstyle{tothree}=[->,shorten <=1mm,shorten >=1mm,\triple{yshift=.5mm}{yshift=-.5mm}{}]
\tikzstyle{dt}=[inner sep=2pt, scale=.7]
\tikzstyle{dts}=[draw, shape=rectangle split, rectangle split parts=3]
\tikzstyle{myloop}=[min distance=3mm,in=-20,out=20,looseness=5]
\node[dt,dts] (1) {(1)%
\nodepart{one}$1,x$\nodepart{two}$1,y$\nodepart{three}$1,z$};
\node[right of=1,dt,dts] (9) {%
\nodepart{one}$9,x$\nodepart{two}$9,y$\nodepart{three}$9,z$};
\node[dt,dts] (2) at (1.8,.8)  {%
\nodepart{one}$2,x$\nodepart{two}$2,y$\nodepart{three}$2,z$};
\node[dt,dts] (5) at (1.6,-.5) {%
\nodepart{one}$5,x$\nodepart{two}$5,y$\nodepart{three}$5,z$};
\node[dt,dts] (6) at (2.1,-.5) {%
\nodepart{one}$6,x$\nodepart{two}$6,y$\nodepart{three}$6,z$};
\node[right of=2,xshift=15mm,dt,dts] (4) {%
\nodepart{one}$4,x$\nodepart{two}$4,y$\nodepart{three}$4,z$};
\node[below of=4,dt,dts] (8) {%
\nodepart{one}$8,x$\nodepart{two}$8,y$\nodepart{three}$8,z$};
\triplearrow{arrows={-Implies}}{(1) -- (9)};
\draw[to] (9.90) to[loop above] (9.90);
\draw[to] (9.50) to[bend left=30] (9.-20);
\draw[to] (9.50) to[bend left=50] (9.-60);
\draw[to] (9.50) .. controls(1.4,1.) .. (2.130);
\draw[to] (9.50) .. controls(1.4,1.) .. (2.180);
\draw[to] (9.50) .. controls(1.4,1.) .. (2.230);
\draw[to] (9.50) .. controls(1.8,.4) and (.6,-1) .. (5.130);
\draw[to] (9.50) .. controls(1.8,.4) and (.55,-1).. (5.180);
\draw[to] (9.50) .. controls(1.8,.4) and (.5,-1) .. (5.230);
\draw[to] (2.0) .. controls(2.7,1) .. (4.130);
\draw[to] (2.-50) .. controls(2.7,.8) ..  (4.180);
\draw[to] (2.-50) .. controls(2.7,.8) ..  (4.230);
\draw[to] (2.0) .. controls(2.8,0.1) .. (8.130);
\draw[to] (2.-50) .. controls(2.8,-.1) ..  (8.180);
\draw[to] (2.-50) .. controls(2.8,-.1) ..  (8.230);
\triplearrow{arrows={-Implies}}{(8) to[bend left=15]  (4)};
\triplearrow{arrows={-Implies}}{(4) to[bend left=15]  (8)};
\triplearrow{arrows={-Implies}}{(5) -- (6)};
\draw[to] (6.0) to[loop right, min distance=3mm,in=-35,out=40,looseness=5] (6.0);
\draw[to] (6.60) to[loop right, min distance=3mm,in=-35,out=40,looseness=5] (6.60);
\draw[to] (6.-60) to[loop right, min distance=3mm,in=-35,out=40,looseness=5] (6.-60);
\draw[to] (4.0) to[loop right, min distance=3mm,in=-35,out=40,looseness=5] (4.0);
\draw[to] (4.60) to[loop right, min distance=3mm,in=-35,out=40,looseness=5] (4.60);
\draw[to] (4.-60) to[loop right, min distance=3mm,in=-35,out=40,looseness=5] (4.-60);
\draw[to] (8.0) to[loop right, min distance=3mm,in=-35,out=40,looseness=5] (8.0);
\draw[to] (8.60) to[loop right, min distance=3mm,in=-35,out=40,looseness=5] (8.60);
\draw[to] (8.-60) to[loop right, min distance=3mm,in=-35,out=40,looseness=5] (8.-60);
\end{tikzpicture}
\end{center}
where a triple arrow $a$\tikz{\node[inner sep=0pt](a){}; \node[inner sep=0pt,xshift=7mm](b){}; \triplearrow{arrows={-Implies}}{(a) -- (b)};}$b$ means that there are arrows from
$(a,x)$ to $(b,x)$, $(a,y)$ to $(b,y)$, and $(a,z)$ to $(b,z)$.
For all $(a,u)\in \TV$, all outgoing edges from $(a,u)$ can be labelled $|u|$, though
more precise approximations are possible.
\end{example}

\newcommand{\SAtriv}{\SA_{triv}}
\newcommand{\SAscc}{\SA_{scc}}

Next, we use the entry variable graph $\TVG$ to obtain size bound refinements,
following the approach of \cite{BEFFG16}.
To that end, we define two processors $\SAtriv$ and $\SAscc$ that refine bounds for
trivial and non-trivial SCCs in $\TVG$, respectively.
Here, an SCC is \emph{trivial} if it consists of a single node without
an edge to itself.

\begin{definition}
For size bounds $\SA$, we define $\SAtriv$ as follows:
  \begin{inparaenum}[(i)]
  \item $\SAtriv\tvar{\rho}{y} = |y|$ if  $\rho$ is initial;
  \item $\SAtriv\tvar{\rho}{y} = \max\{ \alpha(\vec \SA_\delta) \mid \tvar{\delta}{z} \to^\alpha\tvar{\rho}{y} \text{ in }\TVG \}$,
    if $(\rho,y)$ is not in any non-trivial SCC of $\TVG$;
  \item otherwise $\SAtriv\tvar{\rho}{y} = \SA\tvar{\rho}{y}$.
  \end{inparaenum}
\end{definition}

We distinguish three types of edges in $\TVG$, by partitioning their labels
into the three sets $E_=$, $E_+$, and $E_\times$: for an edge labelled $\alpha$,
\begin{inparaenum}[(i)]
\item $\alpha \in E_=$ if $\alpha = a_\alpha \in \Nats$ or
$\alpha = |x|$ for some $x\in \VV$;
\item $\alpha \in E_+$ if $|x| + a_\alpha \geq \alpha$ for some
$x\in \VV$ and $a_\alpha \in \Nats$;
\item $\alpha \in E_\times$ if 
$c  +\sum_{x \in \mathcal X} a_x|x| \geq \alpha$ for $c,a_x\in \Nats$
and $\mathcal X \subseteq \VV$.
\end{inparaenum}
For an SCC $C$ in $\TVG$, let $C_\alpha$ denote the set of 
edge labels $\alpha$ of edges in $C$.
For an entry variable graph $\TVG$, let $\pre(\rho,y)$
be the set of all direct predecessors of $(\rho, y)$ in $\TVG$.
\begin{definition}
\label{def:size bounds sccs}
Let $(\RA,\SA)$ be a bound approximation and
$C$ a non-trivial SCC in $\TVG$. 
Then $\SAscc$ is defined as
  \begin{inparaenum}[(i)]
  \item  if $C_\alpha \subseteq E_=$ then $\SAscc\tvar{\rho}{y} = max \{\alpha \mid \alpha \in C_\alpha\}$,
\item if $C_\alpha \subseteq E_+$ then
let $\alpha_{pre} = max\{ S(\rho',z)\mid \tvar{\rho'}{z} \in\pre\tvar{\rho}{y}\setminus C \}$ and
\[\SAscc\tvar{\rho}{y} = max(\{\alpha_{pre}\}\cup\{a_\alpha \mid \alpha \in C_\alpha\}) +
\sum_{\rho\in \DD}\RA(\rho) \cdot max\{a_\alpha \mid \alpha \in C \setminus E_=\}\]
\item
and $\SAscc\tvar{\rho}{y} = \SA\tvar{\rho}{y}$ otherwise,
\end{inparaenum}
for all $\rho \in C$ and $\tvar{\rho}{y}\in \TV$.
\end{definition}
Both $\SAtriv$ and $\SAscc$ are similar to the bounds developed 
in~\cite{BEFFG16}, though we omitted the case for $E_\times$ for reasons
of space. We obtain soundness by similar proofs.
\begin{lemma}
\label{lem:size bounds}
The following processors are sound:
\[
\frac{\parbox{45mm}{$\judge{(s_0, \DD, \RR)}{(\RA, \SA)}$}}
{\parbox{45mm}{$\judge{\cprob}{(\RA, \SAtriv)}$}}
\quad
\frac{\parbox{45mm}{$\judge{(s_0, \DD, \RR)}{(\RA, \SA)}$}}
{\parbox{45mm}{$\judge{\cprob}{(\RA, \SAscc)}$}} 
\quad
\mathsf{Size\ Bounds}
\]
\end{lemma}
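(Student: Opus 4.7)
The plan is to argue both processors produce valid size approximations, assuming the premise judgement is valid. Since the runtime component $\RA$ is unchanged and $\DD$, $\RR$, and the initial state are preserved, only the size approximation needs verification. Throughout, fix an arbitrary rewrite sequence
$t_0 \sigma \ito^*_{\RR \cup \DD} \cdot \ito_\rho^\tau u$
with $|\vec{x}\sigma| \leq_n \vec{m}$ and $\sigma$ respecting $\varphi_0$, and bound $|y\tau|$ for the entry variable $(\rho, y)$ under consideration.

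For $\SAtriv$, I would proceed by case distinction on the definition. Case (i) is immediate: if $\rho$ is initial, then by assumption $\init^\#$ appears only on the left-hand side of initial DTs and nowhere on right-hand sides, so $\rho$ can only fire at the very first step with $\tau = \sigma$ restricted to the variables of $lhs(\rho)$; by the constraint $|\vec{x}\sigma|\leq_n \vec{m}$ we get $|y\tau| \leq m_i = |y|(\vec m)$. Case (iii) is trivial as $\SA$ is already a valid size approximation by hypothesis. The main case is (ii): since $(\rho, y)$ lies in a trivial SCC, any application of $\rho$ in the rewrite sequence must be preceded by a step using some $\delta \in \pre(\rho)$, and the local size approximation $\SA_{\delta \to \rho}(y)$ upper-bounds $|y\tau|$ in terms of the sizes $|z\sigma'|$ of the variables of $lhs(\delta)$ at the instance $\sigma'$ of $\delta$. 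Each such $|z\sigma'|$ is in turn bounded by $\SA\tvar{\delta}{z}(\vec m)$ by hypothesis, so $|y\tau| \leq \SA_{\delta \to \rho}(y)(\vec \SA_\delta)(\vec m)$; taking the maximum over all predecessors yields the claimed bound.

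For $\SAscc$, the argument hinges on counting how often rules within the non-trivial SCC $C$ can fire, which is at most $\sum_{\rho \in \DD} \RA(\rho)(\vec m)$ by validity of $\RA$. Decompose the rewrite sequence into (a) a prefix ending with the first use of a rule in $C$, which enters via an edge in $\pre\tvar{\rho}{y} \setminus C$, and (b) a suffix consisting exclusively of steps along edges within $C$. The initial entry value of $y$ is bounded by $\alpha_{pre}$. In case $C_\alpha \subseteq E_=$, every subsequent intra-SCC step propagates a value already bounded by some $\alpha \in C_\alpha$, so the overall bound is the maximum of these. In case $C_\alpha \subseteq E_+$, each intra-SCC step increases the tracked size by at most $a_\alpha$ plus a variable contribution $|x|$ which, crucially, is itself bounded because $x$ is an entry variable of a predecessor also in $C$ whose value has only been incremented, so one can inductively telescope to a bound of the form $\max\{\alpha_{pre}\}\cup\{a_\alpha\} + (\text{number of SCC steps}) \cdot \max a_\alpha$; replacing the step count by $\sum_\rho \RA(\rho)$ gives the formula in \defref{size bounds sccs}. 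The $E_\times$ case falls through to the trivial $\SA\tvar{\rho}{y}$ bound.

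The main obstacle is the $E_+$ case: the naive induction through the SCC requires that the variable contributions $|x|$ in edge labels do not compound multiplicatively across traversals. I would handle this by selecting, for each intra-SCC step, the maximum predecessor value along the cycle and showing the \emph{increment} per step is bounded by a constant (namely $\max\{a_\alpha \mid \alpha \in C \setminus E_=\}$), since edges in $E_=$ only rename and edges in $E_+$ add constants plus a variable already being tracked. This yields a linear-in-runtime growth, so multiplying by $\sum_\rho \RA(\rho)$ and adding the initial entry bound $\alpha_{pre}$ produces the stated over-approximation. The remainder of the proof is bookkeeping analogous to the ITS case in~\cite{BEFFG16}.
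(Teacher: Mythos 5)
The paper offers no proof of this lemma at all: it merely remarks that soundness follows ``by similar proofs'' to those of Brockschmidt \etal~\cite{BEFFG16}, and your sketch reconstructs precisely that argument --- for $\SAtriv$, compose the local size approximations on incoming $\TVG$-edges with the premise's global bounds $\vec \SA_{\delta}$ of the predecessors; for $\SAscc$, bound the number of SCC-internal steps by $\sum_{\rho\in\DD}\RA(\rho)$ and the per-step growth by the constants of the $E_+$ labels. Your treatment of the initial and trivial-SCC cases is essentially right (modulo the tacit weak monotonicity of the edge labels needed to justify substituting the upper bounds $\vec\SA_\delta$ into them, which the paper also leaves implicit).

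The step that does not hold as you state it is the decomposition in the non-trivial SCC case. The first application of a DT of $C$ need not correspond to an edge in $\pre\tvar{\rho}{y}\setminus C$: the value relevant for $y$ may enter the SCC at a different node $\tvar{\rho'}{y'}\in C$ and reach $\tvar{\rho}{y}$ only through intra-SCC edges, so bounding ``the initial entry value of $y$'' by $\alpha_{pre}$ is unjustified (indeed, if $\tvar{\rho}{y}$ has no predecessors outside $C$, your $\alpha_{pre}$ is a maximum over the empty set while the true value can be as large as any size entering the SCC elsewhere); likewise, the suffix after entering $C$ is not exclusively made of intra-$C$ steps. The argument that works --- and the one in \cite{BEFFG16} --- traces the provenance of $y$ backwards through $\TVG$ until the chain leaves $C$: each intra-$C$ link adds at most $\max\{a_\alpha \mid \alpha\in C_\alpha\setminus E_=\}$, the number of links is bounded by $\sum_{\rho\in\DD}\RA(\rho)$, and the base of the telescoping is the maximum of the premise size bounds over \emph{all} edges entering the SCC, not just over direct outside predecessors of $\tvar{\rho}{y}$. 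The same entry sizes must be folded into your $E_=$ case, since $\max\{\alpha\mid\alpha\in C_\alpha\}$ with labels of the form $|x|$ is not by itself a closed bound in the input variables. Part of this imprecision mirrors the compressed formulation of \defref{size bounds sccs}, but a sound proof has to range over the whole SCC's entry points rather than over $\pre\tvar{\rho}{y}\setminus C$, and your write-up should say so explicitly.
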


\section{Processors for Splitting and Loop Summary}
\label{sec:newprocs}

In this section we present new processors to decompose
a problem into subproblems, as well as to analyse loops based on recurrence
relations.

\subsubsection*{Splitting.}
\label{sec:split}

We first consider a processor that allows to decompose a problem of a certain shape into
two subproblems. To that end, let a subgraph be \emph{forward closed} if it is 
closed under successors.

\begin{definition}
Consider a problem $P = \cprob$ whose DG $\REG$ exhibits subgraphs 
$G_0$ and $G_1$ with node sets $\DD_0$ and $\DD_1$, respectively,
such that $\DD = \DD_0 \uplus \DD_1$,
all initial DTs of $P$ are in $\DD_0$,
and $G_1$ is forward closed.
Then $(\DD_0, \DD_1)$ is a \emph{splitting} for $P$.
\end{definition}
\noindent
A splitting thus decomposes a problem
according to the scheme illustrated in \figref{subproblems:splitting}.
\begin{figure}
\begin{subfigure}[b]{0.4\textwidth}
\begin{tikzpicture}[scale=.9]
\tikzstyle{dt}=[scale=.8]
\filldraw[draw = gray!40, fill=gray!10] (0,0) circle (9mm);
\filldraw[draw = gray!40, fill=gray!10] (3,0) circle (9mm);
\node[dt] (chi1) at (60:7mm) {$\delta_1$};
\node[dt] (chi2) at (.7, 0) {$\delta_2$};
\node[dt] (chi3) at (300:7mm) {$\delta_m$};
\node[dt] (rho1) at ($(3,0) + (480:7mm)$) {$\gamma_1$};
\node[dt] (rho2) at (2.3,0) {$\gamma_2$};
\node[dt] (rho3) at ($(3,0) + (240:7mm)$) {$\gamma_m$};
\draw[->, bend left=10] (chi1) to (rho1);
\draw[->] (chi2)--(rho2);
\draw[->, bend right=10] (chi3) to (rho3);
\foreach \a in {120, 180, ..., 270}
 \node (a\a) at (\a:7mm) {$\cdot$};
\foreach \a in {300, 360, ..., 420}
 \node (b\a) at ($(3,0) + (\a:7mm)$) {$\cdot$};
\draw[->] (a180)--(a240);
\draw[->] (a180)--(chi2);
\draw[->] (a240)--(chi3);
\draw[->] (chi3)--(chi2);
\draw[->] (chi2)--(chi1);
\draw[->] (chi1)--(a120);
\draw[->] (a120)--(a180);
\draw[->] (rho2)--(rho3);
\draw[->] (rho2)--(b360);
\draw[->] (rho3)--(b300);
\draw[->] (b300)--(b360);
\draw[->] (b360)--(b420);
\draw[->] (b420)--(rho1);
\draw[->] (rho1)--(rho2);
\draw[->] ($(a180) + (-.6,0)$) -- (a180);
\node at (200:1.3cm){$\DD_0$};
\node at ($(3,0) + (340:1.3cm)$) {$\DD_1$};
\node[dt] at (1.5,-.4) {$\dots$};
\end{tikzpicture}
\caption{splitting}
\label{fig:subproblems:splitting}
\end{subfigure}
\hfill
\begin{subfigure}[b]{0.4\textwidth}
\begin{tikzpicture}[scale=.9]
\tikzstyle{dt}=[scale=.8]
\filldraw[draw = gray!40, fill=gray!10] (0,0) circle (3mm);
\filldraw[draw = gray!40, fill=gray!10] (3,0) circle (9mm);
\node[dt, inner sep=6pt] (delta) at (0,0) {$\delta$};
\node[dt] (gamma1) at ($(3,0) + (480:7mm)$) {$\gamma_1$};
\node[dt] (gamma2) at (2.3,0) {$\gamma_2$};
\node[dt] (gamma3) at ($(3,0) + (240:7mm)$) {$\gamma_m$};
\draw[->, bend left=10] (delta) to (gamma1);
\draw[->] (delta)--(gamma2);
\draw[->, bend right=10] (delta) to (gamma3);
\foreach \a in {300, 360, ..., 420}
 \node (b\a) at ($(3,0) + (\a:7mm)$) {$\cdot$};
\draw[->] (gamma2)--(gamma3);
\draw[->] (gamma2)--(b360);
\draw[->] (gamma3)--(b300);
\draw[->] (b300)--(b360);
\draw[->] (b360)--(b420);
\draw[->] (b420)--(gamma1);
\draw[->] (gamma1)--(gamma2);
\draw[->] ($(delta) + (-.6,0)$) -- (delta);
\node at ($(3,0) + (340:1.3cm)$) {$\DD'$};
\node[dt] at (1.5,-.3) {$\dots$};
\draw[->, loop below, min distance=4mm,in=250,out=230,looseness=4] (delta)
to (delta);
\draw[->, loop below, min distance=4mm,out=310,in=290,looseness=4] (delta) 
to (delta);
\node[scale=.6] at (0,-.7) {$1,\dots,p$};
\end{tikzpicture}
\caption{recurrence}
\label{fig:subproblems:recurrence}
\end{subfigure}
\caption{Problems of special shapes.\label{fig:subproblems}}
\end{figure}
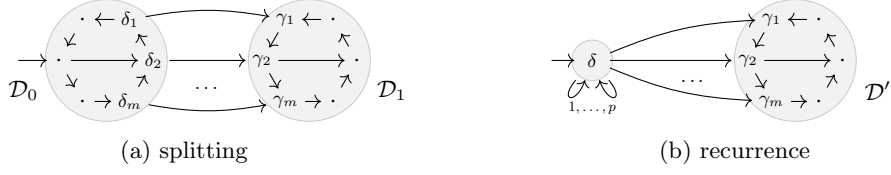
The idea is that we first analyse the subproblems $P_0$ and $P_1$ corresponding to 
$\DD_0$ and $\DD_1$ separately, considering as initial states for $P_1$ all possible entry 
points $\gamma_i$.
For DTs in $\DD_0$ their time bounds in $P_0$ constitute overall time bounds since $G_1$
is forward closed; on the other hand,
for every $\rho \in \DD_1$, we compute time bounds via each entry point $\gamma_i$, and obtain an overall time bound by taking the sum over all $\gamma_i$.
To that end, given $\gamma_i$, the time bound for $\rho$ in $P_1$ is applied to the size bound for
$\gamma_i$, and multiplied by the time bound for the respective $\delta_i$,
which upper-bounds the number of applications of $\delta_i$ followed by $\gamma_i$.
\begin{lemma}
\label{lem:split}
If $\cprob$ is a problem with splitting
$(\DD_0, \DD_1)$ such that
$\pre(\DD_1) = \{(\delta_i,\gamma_i) \mid 1\,{\leq}\,i\,{\leq}\,m\}$ and
$\gamma_i = (\ell_i \to r_i\ [\varphi_i])$,
the following processor is sound:
\[
\frac{\judge{P}{(\RA, \SA)} \quad \ 
\judge{(s_0, \DD_0, \RR)}{(\RA_0, \SA_0)} \quad \ 
\bigwedge_{i=1}^m\judge{((\ell_i, \varphi_i),\DD_1, \RR)}{(\RA_i, \SA_i)}}
{\judge{P}{(\lambda\rho.
\left\{\begin{array}{ll}T_0(\rho) &\text{ if }\rho \in \DD_0\\
\sum_{i=1}^mT_0(\delta_i) \cdot T_i(\rho)(\vec \SA_{\gamma_i})
&\text{ if }\rho \in \DD_1\end{array}\right\}, S)}}
\quad\mathsf{Split}
\]
\end{lemma}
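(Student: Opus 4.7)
The plan is as follows. The size approximation $\SA$ appears unchanged in the conclusion of the rule, so its validity is inherited directly from the premise $\judge{P}{(\RA, \SA)}$; the only task is to verify the new runtime approximation. The central structural ingredient is forward-closedness of $G_1$ in $\REG$: combined with the soundness of $\REG$ as an over-approximation of the DT-dependency structure, this implies that in every rewrite sequence from $t_0\sigma$ with $|\vec x\sigma|\leq_n\vec m$, no DT-step succeeding a $\DD_1$-step in the same causal branch can lie in $\DD_0$. Hence each branch of the computation decomposes into a $\DD_0/\RR$-prefix and, optionally, a $\DD_1/\RR$-tail, whose initial DT-step must be some $\gamma_i$ and whose immediate causal DT-predecessor lies in $\DD_0$; by definition of $\pre(\DD_1) = \{(\delta_i,\gamma_i)\}_{i=1}^m$, that predecessor must be $\delta_i$.

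For $\rho \in \DD_0$, every $\rho$-application occurs inside a $\DD_0/\RR$-prefix. Assembling these prefixes yields a rewrite sequence of the sub-problem $(s_0, \DD_0, \RR)$, so by validity of $(\RA_0, \SA_0)$ the number of $\rho$-steps is bounded by $T_0(\rho)(\vec m)$, as required.

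For $\rho \in \DD_1$, I bound $\rho$-steps tail-by-tail. Each tail is opened by a $\gamma_i$-step at a redex $\ell_i\tau$ with $\tau$ respecting $\varphi_i$; by validity of $(\RA_i, \SA_i)$, the number of $\rho$-applications inside this tail (including the opening $\gamma_i$-step itself when $\rho=\gamma_i$) is bounded by $T_i(\rho)(|\vec y\tau|)$ with $\vec y = \Varvec(\ell_i)$. Since $\SA$ is a valid size approximation of the full problem and $\tvar{\gamma_i}{y}\in \TV$ for each $y\in \vec y$, we have $|\tau(y)| \leq \SA\tvar{\gamma_i}{y}(\vec m)$, so the per-tail bound becomes $T_i(\rho)(\vec \SA_{\gamma_i})(\vec m)$. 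Because the causal predecessor of each $\gamma_i$-entry is a distinct $\delta_i$-application in the $\DD_0$-prefix, the number of $\gamma_i$-tails is at most $T_0(\delta_i)(\vec m)$; summing over $i$ yields the stated bound $\sum_{i=1}^m T_0(\delta_i) \cdot T_i(\rho)(\vec \SA_{\gamma_i})$.

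The technically most delicate step, and the main obstacle, is the formalisation of the branch decomposition, because DT rewriting of LCTRSs is intrinsically non-linear: tuple right-hand sides spawn several parallel DT-redexes that evolve independently. I plan to address this by defining a causal-predecessor relation on the DT-steps of a rewrite sequence, using the definition of $\REG$ to argue that consecutive DT-steps in the same branch are connected by an edge in $\REG$; forward-closedness of $G_1$ then forces the decomposition into prefix and tails, and the enumeration of $\pre(\DD_1)$ classifies each tail by its entry edge, leaving the remaining counting argument as routine bookkeeping.
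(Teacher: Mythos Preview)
Your proposal is correct and follows essentially the same approach as the paper's proof: both exploit forward-closedness of $G_1$ to decompose any evaluation from $t_0\sigma$ into a $\DD_0\cup\RR$-prefix followed by $\DD_1\cup\RR$-tails, classify each tail by its entry edge $(\delta_i,\gamma_i)\in\pre(\DD_1)$, bound the $\rho$-steps within a tail by $T_i(\rho)(\vec\SA_{\gamma_i})$ via the size approximation $\SA$, and multiply by $T_0(\delta_i)$ to account for the number of such entries. The paper phrases the decomposition via a flattened evaluation-tree notation $\llangle\dots\rrangle$ rather than your causal-predecessor relation, but the argument is the same.
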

Several improvements are conceivable, for instance the conditions of the
initial states $(\ell_i, \varphi_i)$ could be strengthened using reachability
analysis in the DG.

\subsubsection*{Summarising Self-Loops.}
\label{sec:loops}
We next propose a technique for the analysis of (sub)problems
whose DG is of the shape shown in \figref{subproblems:recurrence}.
For vectors $\vec a, \vec b$, let 
$\vec a >_k \vec b$ be a shorthand for the expression
$\vec a \geq_k \vec b \wedge (\bigvee_j a_j > b_{j})$.
\begin{definition}
Let $P = \cprobx[(f(\vec x), \varphi)]$ with DG $\REG$ such that
$\DD$ can be written as $\DD = \{\delta\} \uplus \DD'$, the graph
$\REG|_{\DD'}$ is forward-closed in $\REG$, and $\delta$ is of the form:
\begin{equation}
\label{eq:loop}
f(\vec x) 
\to 
\langle f(\vec r_{1}), \dots, f(\vec r_{p}), lhs(\gamma_1), \dots, lhs(\gamma_m)\rangle
\quad[\psi]
\end{equation}
for $\{\gamma_1, \dots, \gamma_m\} \subseteq \DD'$,
such that
$\vec x, \vec r_{i} \in \lterms^k$ and
$\varphi \wedge \psi \models |\vec x| >_k |\vec r_i|$
for all $1\,{\leq}\,i\,{\leq}\,p$.
If there is moreover some $\vec b \in (\Nats\,{\cup}\,\{-\infty\})^k$ such that
$\varphi \wedge \psi \models |\vec x| \geq_k \vec b$, then $P$ is \emph{cyclic} with \emph{termination condition} $\vec b$.
\end{definition}

\begin{lemma}
\label{lem:loops1}
Let $P = \cprob$ be a cyclic complexity problem with termination condition $\vec b$
and a DT $\delta$ of the form 
(\eqref{loop}), and let
$\gamma_i = (\ell_i \to r_i\ [\varphi_i])$,
for all $i$, $1\,{\leq}\,i\,{\leq}\,m$.
Then the following processor is sound:
\[
\frac{\judge{P}{(\RA, \SA)} \quad\bigwedge_{i=1}^m\judge{((\ell_i, \varphi_i), \DD', \RR)}{(\RA_i,\SA_i)}}
{\judge{\cprob}{(F(\vec x)_\Sigma, \SA)}} 
\quad
\mathsf{Recurrence}
\]
where $F$ is a solution to a recurrence
$f(\vec x) = f(\vec r_{1}) + \ldots + f(\vec r_{p}) + H(\vec x)$,
$f(\vec b) = 0$
for some $H(\vec x) \geq \sum_{\rho \in \DD'}\sum_{i=1}^m \RA_i(\rho)(\vec \SA_{\gamma_i})$.
\end{lemma}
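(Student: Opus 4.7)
The plan is to establish soundness by well-founded induction on the sizes $|\vec x\sigma|$ of the initial arguments of the recursive DT $\delta$. The cyclic structure ensures that any derivation from $f(\vec x)\sigma$ decomposes into a finite recursion tree whose internal nodes are applications of $\delta$. I first verify well-foundedness: by $\varphi \wedge \psi \models |\vec x| \geq_k \vec b$ and $\varphi \wedge \psi \models |\vec x| >_k |\vec r_i|$, every $\delta$-step requires $|\vec x\sigma| \geq_k \vec b$ and strictly decreases the argument sizes, so induction over the order $>_k$ bounded below by $\vec b$ is valid.

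The main induction establishes: for any valuation $\sigma$ respecting $\varphi$, any innermost derivation from $f(\vec x)\sigma$ uses at most $F(|\vec x\sigma|)$ DT-steps in total. For the base case, where $|\vec x\sigma|$ fails $|\vec x| \geq_k \vec b$ componentwise, the constraint $\psi$ cannot be satisfied, so $\delta$ cannot fire; since $\DD'$ is forward-closed, no rule from $\DD'$ can apply to $f(\vec x)\sigma$ either, and we obtain $0 \leq F(\vec b)$ DT-steps. For the inductive case, any innermost derivation begins with calculation steps followed by an application of $\delta$, producing the tuple $\langle f(\vec r_1), \ldots, f(\vec r_p), \ell_1, \ldots, \ell_m\rangle\sigma'$ where $\sigma'$ extends $\sigma$ with values for calculated subterms. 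Because the tuple symbol is uninterpreted, the derivations on its $p+m$ subterms proceed independently.

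For each recursive subterm $f(\vec r_i)\sigma'$, the induction hypothesis bounds its DT-steps by $F(|\vec r_i\sigma'|)$. For each side-branch starting from $\ell_i\sigma' = lhs(\gamma_i)\sigma'$, the forward-closedness of $\DD'$ guarantees that subsequent steps stay within $\DD'$, and the validity of $\judge{((\ell_i, \varphi_i), \DD', \RR)}{(\RA_i, \SA_i)}$ bounds the $\rho$-steps by $\RA_i(\rho)$ evaluated at sizes of $\Varvec(\ell_i)\sigma'$. The validity of the original size approximation $\SA$ for $P$ gives $|\Varvec(\ell_i)\sigma'| \leq_k \vec\SA_{\gamma_i}$ (uniformly, since $\vec\SA_{\gamma_i}$ is expressed in the initial input $|\vec x\sigma|$), so side-branch $i$ contributes at most $\sum_{\rho\in\DD'}\RA_i(\rho)(\vec\SA_{\gamma_i})$ steps. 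Summing over all $m$ side-branches together with the single $\delta$-application is bounded above by $H(\vec x\sigma)$ (absorbing the $+1$ via $\RA_i(\gamma_i) \geq 1$). Combined with the $p$ recursive branches, the total matches the recurrence defining $F$, yielding $F(|\vec x\sigma|)$ as an overall bound. A runtime approximation $\RA'$ with $\sum_{\rho\in\DD}\RA'(\rho) = F(\vec x)$ is then obtained by distributing this bound over the DTs, establishing the judgement $\judge{P}{(F(\vec x)_\Sigma, \SA)}$.

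The main obstacle will be making the informal ``independent subterms'' argument rigorous under innermost rewriting: one must check that interleaving steps in different subterms of the tuple does not inflate the total DT-count, and that left-linearity plus the innermost restriction permits cleanly splitting derivations along the recursion tree. A subtler point is that $\vec\SA_{\gamma_i}$, being stated in the original problem's initial variables $\vec x$, yields a uniform per-node bound on the side-branch work---this uniformity is what makes the recurrence tractable and enables the appeal to solutions such as the Master Theorem.
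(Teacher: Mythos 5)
Your proof is correct, and it rests on the same underlying induction as the paper's --- along the well-founded decrease of $|\vec x|$ towards the termination condition $\vec b$ --- but it distributes the work quite differently, so a comparison is worthwhile. The paper's proof asserts in a single sentence, justified only by the shape of the dependency graph, that the derivations of the cyclic problem are overapproximated by the recurrence with the \emph{exact} per-node cost $C(\vec x)=\sum_{i=1}^m\sum_{\rho\in\DD'}\RA_i(\rho)(\vec\SA_{\gamma_i})$, and then devotes all of its written detail to a purely numeric comparison: if $F$ solves the recurrence with $H\geq C$ and $F'$ the one with $C$, then $F\geq F'$, shown by induction from the base case $\vec b$. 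You instead run a single operational induction over derivations, proving explicitly the derivation-to-recurrence correspondence that the paper takes for granted (decomposing the recursion tree at each $\delta$-application, bounding the $p$ recursive branches by the induction hypothesis and the $m$ side branches by composing the subproblem runtime bounds $\RA_i$ with the size bounds $\vec\SA_{\gamma_i}$), and you fold the paper's comparison lemma into the induction step by using $H\geq C$ directly. What your route buys is rigour precisely where the paper is silent, and it surfaces real subtleties the paper glosses over: the independence of tuple branches under innermost rewriting, the fact that the uniformity of $\vec\SA_{\gamma_i}$ (expressed in the initial inputs) is what makes the per-node cost a function of $\vec x$, and the accounting of the $\delta$-steps themselves. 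On that last point, your device of absorbing the ``$+1$'' via $\RA_i(\gamma_i)\geq 1$ is not actually guaranteed by the premises (e.g.\ if $m=0$ or no $\varphi_i$ is satisfiable), but the paper's own statement and proof simply do not count the $\delta$-steps separately at all, so here you are at least as careful as the published argument; what the paper's factoring buys in exchange is a cleanly isolated numeric lemma and a much shorter proof that delegates the operational part to the general framework.
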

This processor is key to analyse the main loop in our running example.
\begin{example}
Consider \exaref{mergesort} with chaining as applied in \exaref{mergesort2}.
For the subproblems
$P_1 = (\mergeSort_0^\#(x,u,v), \psi), \DD,\RR)$ and
$P_2 = (\mergeSort_3^\#(x,u,v), \psi), \DD,\RR)$
the judgements
$\judge{P_1}{((x+1)_\Sigma,\SA)}$
and
$\judge{P_2}{(u + v + 1)_\Sigma, \SA)}$
are valid, so we can set 
\newpage
\noindent
$H(x,u,v) = 2|x|+1 \geq x + u + v +1$ since $u,v \leq x/2$.
Thus, we solve the recurrence (\eqref{mergesort recursion}) given
in \secref{mergesort}. 
According to one of the cases of the Master Theorem, 
(\eqref{mergesort recursion}) has a solution in $\mathcal O(|x| \cdot \log(|x|))$
which is a complexity approximation according to \lemref{loops1}.
\end{example}

To simplify the presentation, we only considered cycles
formed by a single DT, as indicated in \figref{subproblems:recurrence}.
The result generalizes to
longer cycles, but chaining can often reduce these cases to the 
simpler situation discussed here.

\section{Evaluation}
\label{sec:implementation}

To evaluate the viability of the presented framework, we prototyped
our approach in the complexity analyser \TCT~\cite{AMS16}.

\paragraph{Implementation.}
We added a new module \texttt{tct-lctrs} to the \TCT tool suite,
below we call the resulting tool \tctlctrs.%
\footnote{The code is available from \url{https://github.com/bytekid/tct-lctrs}.}
It currently supports the theory of integers, as well as some operations on lists.
%
All processors described in this paper are implemented, using the modular processor
framework of \TCT. They are arranged in the following strategy,
where the loop indicates exhaustive repetition:
\begin{center}
\begin{tikzpicture}[node distance=15mm]
\tikzstyle{sstep}=[scale=.8, draw,rectangle, inner sep=3pt, minimum height=7mm, rounded corners=1mm,font=\sffamily]
\tikzstyle{to}=[->, rounded corners=1mm]
\node[sstep] (dt) {\begin{tabular}{@{}c@{}}dependency\\[-.2ex]tuples\end{tabular}};
\node[sstep] at ($(dt.east) + (.65,0)$) (simp1) {simp};
\node[sstep] at ($(simp1.east) + (1.1,0)$) (sb1) {size bounds};
\node[sstep] at ($(sb1.east) + (1.2,0)$) (tb1) {time bounds};
\node[sstep] at ($(tb1.east) + (.68,0)$) (simp2) {simp};
\node[sstep] at ($(simp2.east) + (.68,0)$) (chain) {chain};
\node[sstep] at ($(chain.east) + (.68,0)$) (split) {split};
\node[sstep] at ($(split.east) + (1,0)$) (loops) {recurrence};
\draw[to] ($(dt.west) + (-.3,0)$) -- (dt);
\draw[to] (dt) -- (simp1);
\draw[to] (simp1) -- (sb1);
\draw[to] (sb1) -- (tb1);
\draw[to] (tb1) -- (simp2);
\draw[to] (simp2) -- (chain);
\draw[to] (chain) -- (split);
\draw[to] (split) -- (loops);
\draw[to] (simp2) -- ($(simp2.east) + (.1,0)$) |- ($(simp1.east) + (.1,-.45)$) -- ($(simp1.east) + (.1,0)$) -- (sb1);
\draw[to] (loops) -- ($(loops.east) + (.3,0)$);
\end{tikzpicture}
\end{center}
We mention some implementation aspects that seem noteworthy.
\begin{itemize}
\item The \textsf{simp} processor combines some straightforward simplification processors: unsatisfiable paths, 
unreachable rules, and unused arguments are eliminated, and leaves in the DG obtain their time bound 
from their predecessors.
\item
Suitable algebras instantiating the interpretation and time bounds processors (\lemsref{interpretation}
{timebounds}) are searched for by means of an SMT encoding, as
done in the ITS module of \TCT previously using well-known techniques~\cite{PodelskiR04,BagnaraM13}.
\item
Before applying the recurrence processor, \TCT first applies chaining
to obtain loops that involve only a single DT (see Appendix B for details). 
\item In the recurrence processor (\lemref{loops1}),\TCT first attempts to solve 
subproblems corresponding to the functions $\seq[m]h$ separately, obtaining bound approximations
$(\RA_i, \SA_i)$ for all $i$, $1\leq i \leq m$ (see the notation of \lemref{loops1}).
Then, it is checked whether a function $H$ corresponding to one of the
known recursion patterns satisfies $H(\vec x) \geq \sum_i \sum_{\rho\in \DD'}T_i(\rho)$ using an SMT call.
\item 
The splitting processor (\lemref{split}) leaves a lot of choice to the 
implementation where to split. We currently use it to enable the loop processor,
which requires a very particular problem shape.
\end{itemize}
If a subroutine requires an SMT query, \TCT interfaces \textsf{Yices}~\cite{D14} and 
\textsf{Z3}~\cite{MB08}.

\paragraph{Experiments.}
We evaluated \tctlctrs on the ITS benchmarks considered by Brockschmidt \etal~\cite{BEFFG16}, using a timeout of 60 seconds.
\tabref{itss} compares our implementation with \koat~\cite{BEFFG16},
\cofloco~\cite{Montoya17,FM16}, the ITS version of 
\TCT~\cite{AMS16}, and \pubs~\cite{AAGP08}, giving 
\pagebreak\noindent
the number of problems for which a bound
was derived at all, the number of constant bounds, and the number of bounds that are at most linear, quadratic, and cubic, respectively.
\begin{table}
\begin{center}
\begin{tabular}{l|c|c|c|c|c}
 &\textsf{\tctlctrs} &\koat & \cofloco & \TCT-\textsf{ITS} & \pubs\\
\hline
solved problems & 359 & 404 & 347 & 309 & 285 \\ \hline
constant & 119 & 131 & 117 & 118 & 109 \\ \hline
$\leq \OO(n)$ & 282 & 298 & 270 & 250 & 240 \\ \hline
$\leq \OO(n^2)$ & 345 & 376 & 336 & 300 & 270 \\ \hline
$\leq \OO(n^3)$ & 356 & 383 & 345 & 306 & 278 
\end{tabular}
\end{center}
\caption{Comparison of tools on ITS benchmarks.\label{tab:itss}}
\vspace{-.5cm}
\end{table}
The new splitting and recurrence processors allow \tctlctrs to derive
sublinear bounds. This is the case for all problems where \pubs derives a (precise) logarithmic bound, such as
the examples \texttt{div\-ByTwo} and \texttt{direct\_n\_log\_n}. 
(\koat and \cofloco do not support sublinear bounds, and hence output linear
bounds for these examples.)
Moreover, we can precisely analyse subproblems produced by a divide-and-conquer approach like \texttt{divide\_and\_conquer}, where \TCT (as well as \koat) produces the tight linear bound, while \cofloco fails and \pubs gives an exponential bound.
Detailled results, including a complete table and \TCT output, are available on-line.\footnote{See %
\url{http://cl-informatik.uibk.ac.at/users/swinkler/lctrs_complexity/}}

We moreover tested \TCT on the set of logic programs collected by Mesnard and Neumerkl~\cite{MN01},%
\footnote{See~\url{http://www.complang.tuwien.ac.at/cti/bench/}.}
restricted to deterministic programs.
A list of solved problems is available on-line as well.

\section{Conclusion}
\label{sec:conclusion}

This paper presented the first complexity framework for LCTRSs. We conclude by relating to earlier work in the area, before indicating leads for future 
research.

\paragraph{Related work.}
In the last decades there has been significant progress in the area of \emph{fully automated} resource analysis,
showing that it can be both practicable and scalable, see e.g.~\cite{WEEHTWBFHMMPPSS:2008,AAGP08,G:2009,AGM:2013,SLH:2014,WG:2014,ADM:2015,FG:2017,HDW:2017,MS:2018}.
In the following, we indicate related work that directly influenced our framework, or employed similar methods.

Our framework differs from earlier work by Avanzini and Moser~\cite{AM16} in three
important respects: first, constraints over arbitrary background theories
are supported, second, complexity is not expressed in terms of the size of the initial 
term 
but in terms of measure functions,
and third, sublinear bounds can be derived.
While innermost rewriting is a rather
natural restriction for LCTRSs, \emph{call by need} strategies could be considered in the future for LCTRSs, too.

LCTRSs generalise ITSs, the complexity analysis of which is 
subject to a comprehensive line of research~\cite{BEFFG16,NEG13}.
Our approach gracefully extends the alternating time and size bound technique by
Brockschmidt \etal~\cite{BEFFG16}, as the ITS case is fully covered. In addition,
we can obtain sublinear bounds, and support further 
modularization. Moreover, LCTRSs offer native support for full recursion.

Sublinear bounds are beyond the scope of this earlier work, but can be
inferred by some other tools. 
Albert \etal~\cite{AAGP08} apply refinements to linear ranking
functions and support sufficient criteria for divide-and-conquer patterns. 
This allows the tool \pubs to recognize logarithmic and $\mathcal O(n\,\log(n))$
bounds for some problems.
Chatterjee \etal~\cite{ChatterjeeFG17} use synthesis ranking functions extended
by logarithmic and exponential terms, making use of an insightful
adaption of Farkas' and Handelman's lemmas. The
approach is able to handle examples such as \textsf{mergesort}. In contrast 
to our work this amounts to a whole-program analysis. Further, extensibility 
to a constraint formalism like LCTRS is unclear.
Wang \etal~\cite{WWC17} present an ML-like
language with type annotations, also using the Master Theorem to handle divide-and-conquer-like recurrences.
To estimate lower bounds for logic programs
based on divide-and-conquer,
Debray \etal~\cite{DLHL97} consider
non-deterministic recurrence relations and propose a technique to obtain
a closed-form bound for some cases.

\paragraph{Future work.}
We see exciting directions for future work both on a 
theoretical and an application level.
Various additional processors can be conceived
for our complexity framework, for instance forms of dependency pairs for
non-innermost rewriting~\cite{NEG13,HM08}, knowledge propagation and
narrowing~\cite{NEG13}.

Simplification systems as, for instance, employed in compiler toolchains (cf. 
\exaref{alive}) or SMT solvers constitute a highly relevant application 
domain, since these routines operate in performance-critical contexts.
In order to tackle such systems, techniques for \emph{derivational}
complexity of LCTRSs need to be developed.

On the application level, LCTRSs constitute a natural backend for complexity analysis of constraint
logic programs, since constraints can be natively expressed.
Our experiments with logic programs did not take backtracking into account,
but suitably adapting the transformational frameworks as established by
Giesl \etal~\cite{GSSEF:2012} to LCTRSs, 
this is not a showstopper: 
There the authors provide an automated complexity and termination analysis
  of full Prolog programs. In particular, the aforementioned restriction to deterministic
programs can be overcome.
We thus plan to support CLP as a frontend of our analysis, possibly
taking into account \emph{labelling strategies} that control the instantiation of query terms.
We furthermore plan to support C programs as a frontend.
C programs with integers, as considered in the Termination Competition\footnote{\url{http://termination-portal.org/}} can be 
expressed as ITSs. LCTRSs offer more flexibility and can support also
strings and floats, as the respective theories are supported by SMT
solvers. Just like for the case of CLP, this requires the development of
suitable complexity-reflecting transformations.
More experiments are planned to evaluate our method on
(constrained) logic programs~\cite{MN01} and problems from the software competition.\footnote{\url{https://sv-comp.sosy-lab.org/}}
\newpage
\bibliographystyle{abbrv}
\bibliography{references}

\appendix
\section{Proofs}

\begin{numberedlemma}{1}
Suppose $P=\cprobx$ is a complexity problem and $\MM$ 
a $\DT$-measure interpretation with which $\RR$ is weakly, and $\DD$ is strictly compatible. Then the following processor is sound:
\[
\frac{\judge{P}{(\RA, \SA)}}
{\judge{P}{(\RA', \SA)}} \quad\mathsf{Interpretation}
\]
where $\RA'(\rho) = (t_0^\#)^\MM$ for all $\rho\in \DD_{>}$,
and $\RA'(\rho) = \RA(\rho)$ otherwise.
\end{numberedlemma}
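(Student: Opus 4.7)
The plan is to verify three properties of the new judgement separately. First, the size approximation $\SA$ is unchanged from the premise, so its validity is inherited. Second, for any $\rho \notin \DD_>$ we have $\RA'(\rho) = \RA(\rho)$, so the bound remains valid. The real work is the third case: for every $\rho \in \DD_>$, I must show that $[(t_0)^\MM](\vec m)$ overapproximates $\dh(t_0\sigma, \ito_{\{\rho\}/\DD \cup \RR})$ for each substitution $\sigma$ respecting $\varphi_0$ with $|\vec x\sigma| \leq_n \vec m$, where $\vec x = \Varvec(t_0)$.

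For this third case the strategy is to use $\MM$ as a ranking function on terms. Fix such a $\sigma$ and any finite rewrite sequence $t_0\sigma = u_0 \ito u_1 \ito \cdots \ito u_N$ using rules of $\DD \cup \RR$ and let $k$ be the number of steps employing $\rho$. By weak monotonicity of the $\MM$-interpretation functions and the fact that in a rule step the variables of the constraint are instantiated with values (so that the constraint indeed holds at this step), each $\ito_\rho$ step strictly decreases $[\cdot]^\MM$ by at least one, by strict compatibility of $\rho \in \DD_>$. Each step using a different rule of $\DD \cup \RR$ weakly decreases $[\cdot]^\MM$ by weak compatibility of $\DD$ and $\RR$, and each $\tocalc$ step weakly decreases $[\cdot]^\MM$ by the extra monotonicity clause in the definition of a measure interpretation (which says $f^\MM([\vec t]_\JJ^\MM) \geq [f(\vec t)]_\JJ^\MM$ for value arguments). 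Since $[\cdot]^\MM$ takes values in $\Nats$, this yields $k \leq [u_0]^\MM - [u_N]^\MM \leq [t_0\sigma]^\MM$.

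It then remains to bound $[t_0\sigma]^\MM$ by $[(t_0)^\MM](\vec m)$. Since $t_0 = \init(\vec x)$, the expression $(t_0)^\MM = \init^\MM(|x_1|,\dots,|x_n|)$ is a bound expression over $|\vec x|$ whose evaluation at the sizes of $\sigma$ equals $[t_0\sigma]^\MM$. By the definition of the bracket operation, $[(t_0)^\MM] \geq (t_0)^\MM$ pointwise on $\Nats^n$ and $[(t_0)^\MM]$ is weakly monotone in its variables; together with $|\vec x\sigma| \leq_n \vec m$ this gives $[t_0\sigma]^\MM \leq [(t_0)^\MM](|\vec x\sigma|) \leq [(t_0)^\MM](\vec m)$, closing the chain. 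The main obstacle is technical rather than conceptual: one must carefully lift rule-level compatibility ($\ell \cgt{\varphi}^\MM r$ under constraint-respecting valuations) to context-level decrease ($C[\ell\sigma]^\MM > C[r\sigma]^\MM$), which requires weak monotonicity of the $\MM$-functions and relies on the observation that subterms at variable positions are evaluated identically on both sides, while the correct treatment of $\tocalc$ is precisely what the monotonicity clause in the definition of measure interpretation was designed to secure.
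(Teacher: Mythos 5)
Your proposal is correct and follows essentially the same route as the paper's proof: use the measure interpretation as a ranking argument over $\Nats$, with strict decrease on each $\rho$-step for $\rho \in \DD_>$ and weak decrease on all other $\DD\cup\RR\cup\calc$-steps, so that the number of $\rho$-steps is bounded by the interpretation of the (instantiated) initial term, which is then bounded by the expression in $\vec m$. You are in fact somewhat more explicit than the paper about the final monotonicity step (relating $[t_0\sigma]^\MM$ to the bound expression via the bracket operation) and about why calculation steps are weakly decreasing, details the paper subsumes under ``compatibility.''
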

\begin{proof}
By assumption, $\RR$ has a unique initial state $(t_0, \varphi_0)$.
Let $\vec x=\Varvec(t_0)$, and $\rho \in \DD_>$.
We show that for
${\to_\TT} = {\to_{(\DD\cup\RR\cup\calc)\setminus \{\rho\}}}$,
\[
(t_0^\#)^\MM(\vec m) \geq \sup\:\{ k \mid \exists \sigma.\ 
t_0^\#\sigma \mathrel{(\to_\TT^* \cdot \to_\rho)^k} u\text{, }
|\vec x\sigma| \leq_n \vec m\text{ and $\sigma$ respects }\varphi_0\}
\]
for all $\vec m \in \Nats^n$.
Consider a substitution $\sigma$ such that $|\vec x\sigma| \leq_n \vec m$,
and a
(potentially infinite) rewrite sequence in which $k \in \Nats \cup \{\omega\}$
steps apply rule $\rho$:
\begin{equation}
\label{eq:MM sequence}
t_0^\#\sigma = v_0 \xrightarrow[\TT]{*} 
u_1 \xrightarrow[\rho]{} v_1 \xrightarrow[\TT]{*}
u_2 \xrightarrow[\rho]{} v_2 \xrightarrow[\TT]{*} \dots
\end{equation}
The goal is to verify $(t_0^\#)^\MM(\vec m) \geq k$. If $k=0$ then this is
obvious because the domain of $\MM$ is $\Nats$, so suppose $k>0$.
By compatibility, we have $v_{0}^\MM \geq u_1^\MM$, $u_{i}^\MM > v_i^\MM$, and
$v_i^\MM \geq u_{i+1}^\MM$, for all $i \geq 1$.
Thus $k$ must be finite, and there can be at most $(t_0^\#)^\MM(\vec m) $ 
$\rho$-steps in \eqref{MM sequence}.
\qed
\end{proof}

\begin{numberedlemma}
Suppose $P=\cprob$ is a complexity problem such that
$\DD' \subseteq \DD$ has no initial dependency tuples and
$\RR$ is weakly, and $\DD'$ is strictly compatible with a $\DT$-measure
interpretation $\MM$. Then the following processor is sound:
\[
\frac{\judge{P}{(\RA, \SA)}}
{\judge{P}{(\RA', \SA)}} \quad\mathsf{TimeBounds}
\]
such that 
$\RA'(\rho) = 
\sum_{(\gamma, \delta) \in \pre(\DD')} \RA(\gamma) \cdot
\ell_2^\MM(\vec \SA_{\delta})$ for all $\rho \in \DD'_>$,
where $\delta:\ell_2 \to r_2~[\varphi_2]$ and
$\vec \SA_{\delta}$ the vector
$\SA\tvar{\delta}{y_1}, \dots, \SA\tvar{\delta}{y_k}$
for $\seq[k]y$ the variables in $\ell_2$.
Otherwise, for all $\rho \in \DD \setminus \DD'_>$ set $\RA'(\rho) = \RA(\rho)$.
\end{numberedlemma}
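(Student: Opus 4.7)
The plan is to show that for each $\rho \in \DD'_>$ and each valuation $\sigma$ respecting $\varphi_0$ with $|\vec x\sigma| \leq_n \vec m$, any innermost rewrite sequence starting at $t_0^\#\sigma$ contains at most $\sum_{(\gamma,\delta) \in \pre(\DD')} \RA(\gamma)(\vec m) \cdot \ell_2^\MM(\vec\SA_\delta(\vec m))$ many $\rho$-steps, where $\ell_2 = lhs(\delta)$. The central idea is to attribute every $\DD'_>$-step to an \emph{entry into $\DD'$}, i.e.\ to the closest preceding ancestor DT-application in the derivation that used some rule $\delta \in \DD'$ whose immediate DG-predecessor is a rule $\gamma \in \DD \setminus \DD'$, via an edge $(\gamma,\delta) \in \pre(\DD')$. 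Since $\DD'$ contains no initial DTs, such an ancestor must exist for every $\DD'$-step, making the attribution well defined.

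For the per-entry bound, I would fix an entry step $s \ito_\delta^\tau t$ on some DT-occurrence and consider the subderivation descending from it that stays within $\DD'$. Exploiting the DT-measure convention $\langle t_1,\dots,t_k\rangle^\MM = \sum_i t_i^\MM$, the $\MM$-value of the corresponding subterm can be tracked separately from the rest of the tuple state. By an argument analogous to the Interpretation processor (Lemma~1), every subsequent step within this subderivation is either a calculation step (preserving the $\MM$-value by the measure-interpretation condition), a step in $\RR$ or $\DD' \setminus \DD'_>$ (weakly decreasing by weak compatibility), or a step in $\DD'_>$ (strictly decreasing by strict compatibility). As $\MM$ is $\Nats$-valued, this yields at most $\ell_2\tau^\MM$ many $\DD'_>$-steps attributable to the entry. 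Using the size-bound property $|y_i\tau| \leq \SA\tvar{\delta}{y_i}(\vec m)$ for the variables $y_1,\dots,y_k$ of $\ell_2$ together with monotonicity of $\ell_2^\MM$ over $\Nats$, this is in turn bounded by $\ell_2^\MM(\vec\SA_\delta(\vec m))$.

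To aggregate, I observe that the number of entries via a fixed edge $(\gamma,\delta) \in \pre(\DD')$ is bounded by the total number of $\gamma$-applications in the derivation, which by validity of the premise $\judge{P}{(\RA,\SA)}$ is at most $\RA(\gamma)(\vec m)$. Summing the per-entry bound over all edges in $\pre(\DD')$ yields the claimed runtime approximation for $\rho \in \DD'_>$. For $\rho \not\in \DD'_>$ the runtime bound and size approximation are inherited unchanged from the premise, preserving validity.

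The main obstacle will be making the per-entry subtree analysis rigorous on concrete innermost DT-derivations. One must verify that the potential $\ell_2\tau^\MM$ injected at an entry really bounds only the $\DD'_>$-steps descending from that entry rather than leaking into unrelated parts of the tuple state; the sum structure of $\langle\rangle^\MM$ for DT-measure interpretations is the key technical device that enables this per-subtree decomposition. Care is also needed to check that interspersed $\calc$- and $\RR$-steps do not disturb the bookkeeping, but this follows directly from the measure-interpretation and weak-compatibility assumptions respectively.
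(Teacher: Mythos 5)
Your proposal follows essentially the same route as the paper's proof: every $\DD'_>$-step is attributed to an entry into $\DD'$ through an edge $(\gamma,\delta)\in\pre(\DD')$ (which exists since $\DD'$ has no initial DTs), each such entry contributes at most $lhs(\delta)^\MM(\vec\SA_\delta)$ many $\DD'_>$-steps by weak/strict compatibility of $\RR$ and $\DD'$ with the $\DT$-measure interpretation, and the number of entries via $(\gamma,\delta)$ is bounded by $\RA(\gamma)$, so summing the products over $\pre(\DD')$ gives the claimed bound. This matches the paper's decomposition of the derivation into $\UU$-phases and $\DD'\cup\RR\cup\calc$-phases; your remarks on the tuple-sum interpretation enabling the per-entry bookkeeping only make explicit what the paper leaves implicit.
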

\begin{proof}
It suffices to show that
\[
\RA'(\rho)(\vec m)
\geq \sup\:\{ k \mid 
t_0^\#\sigma \mathrel{(\to_\RR^* \cdot \to_\rho)^k} u\text{ and }
|\vec x\sigma| \leq_n \vec m\text{ and $\sigma$ respects }\varphi_0\}
\]
holds for all $\rho \in \DD'_{>}$ and $\vec m \in \Nats^n$.
Fix some $\rho \in \DD'_>$ and let 
$\to_\TT = \to_{(\DD\cup\RR\cup\calc)\setminus \{\rho\}}$.
Consider a substitution $\sigma$ such that $|\vec x\sigma| \leq_n \vec m$,
and a (potentially infinite) rewrite sequence
\begin{equation}
\label{eq:MM sequence2}
t_0^\#\sigma = v_0 \xrightarrow[\TT]{*} 
u_1 \xrightarrow[\rho]{} v_1 \xrightarrow[\TT]{*}
u_2 \xrightarrow[\rho]{} v_2 \xrightarrow[\TT]{*} \dots
\end{equation} with $k \in \Nats \cup \{\omega\}$ steps using $\rho$.
It has to be verified that the expression
$B := \sum_{(\gamma, \delta) \in \pre(\DD')} \RA(\gamma) \cdot
\ell_2^\MM(\vec \SA_{\delta})$ satisfies
$B(\vec m) \geq k$.
We can write sequence \eqref{MM sequence2} as
\begin{equation}
\label{eq:timebounds}
t_0^\#\sigma = v_0 \xrightarrow[\UU]{*} 
w_1' \xrightarrow[\UU]{}
w_1 \xrightarrow[\DD'\cup \RR\cup\calc]{*} v_1 \xrightarrow[\UU]{*}
w_2' \xrightarrow[\UU]{}
w_2 \xrightarrow[\DD'\cup\RR\cup\calc]{*} v_2 \xrightarrow[\UU]{*} \dots
\end{equation}
for $\UU = \RR\cup \DD \cup \calc \setminus \DD'$.
Every step $w_i' \to_\UU w_i$ before a term $w_i$ must
use a rule $\rho_i$ such that $(\rho_i, \delta_i) \in \pre(\DD')$,
for all $i>0$.
Fix some $(\gamma, \delta) \in \pre(\DD')$, and let $i$ be such that
$(\rho_i, \delta_i) = (\gamma, \delta)$.
Moreover, in the subsequence $w_i \to_{\DD'\cup\RR\cup\calc}^* v_i$ the value under the interpretation $\MM$ is weakly decreasing, and for every $\rho$-step
there is a strict decrease as $\rho \in \DD_>'$.
Hence the number of $\DD'_>$ steps (and in particular $\rho$-steps) in 
$w_i \to_{\DD'}^* v_i$ is bounded by $N:=lhs(\delta)^\MM(\vec \SA_{\delta})$.
The number of $\gamma$ steps followed by $\delta$ 
steps in \eqref{timebounds} is clearly bounded by $T(\gamma)$.
Summing up the product $N \cdot T(\gamma)$ over all
pairs $(\gamma,\delta) \in \pre(\DD')$ yields 
$B(\vec m) \geq k$.
\qed
\end{proof}

\setcounter{numberedlemma}{4}
\begin{numberedlemma}
If $\cprob$ is a problem with splitting
$(\DD_0, \DD_1)$ such that
$\pre(\DD_1) = \{(\delta_i,\gamma_i) \mid 1\,{\leq}\,i\,{\leq}\,m\}$ and
$\gamma_i = (\ell_i \to r_i\ [\varphi_i])$,
the following processor is sound:
\[
\frac{\judge{P}{(\RA, \SA)} \quad \ 
\judge{(s_0, \DD_0, \RR)}{(\RA_0, \SA_0)} \quad \ 
\bigwedge_{i=1}^m\judge{((\ell_i, \varphi_i),\DD_1, \RR)}{(\RA_i, \SA_i)}}
{\judge{P}{(\lambda\rho.
\left\{\begin{array}{ll}T_0(\rho) &\text{ if }\rho \in \DD_0\\
\sum_{i=1}^mT_0(\delta_i) \cdot T_i(\rho)(\vec \SA_{\gamma_i})
&\text{ if }\rho \in \DD_1\end{array}\right\}, S)}}
\quad\mathsf{Split}
\]
\end{numberedlemma}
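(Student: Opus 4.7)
The plan is to verify that $(\RA',\SA)$ is a bound approximation for $P$, given the three input judgements. Size bounds being inherited unchanged from $\judge{P}{(\RA,\SA)}$, only the new runtime bound $\RA'$ requires justification. The structural engine of the argument is the forward-closedness of $G_1$: on any rewrite sequence from $t_0\sigma$, once a sharped subterm is reduced by a rule in $\DD_1$, every further reduction of its descendants uses only rules from $\DD_1\cup\RR\cup\calc$. Each rewrite sequence can therefore be viewed as an interleaving of sharped-subterm threads that run initially in $\DD_0\cup\RR\cup\calc$ and pass into $\DD_1\cup\RR\cup\calc$ at the first application of some $\gamma_i$.

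For $\rho\in\DD_0$, I would fix a rewrite sequence $\pi$ from $t_0\sigma$ with $|\vec x\sigma|\leq_n\vec m$ and build a sequence $\pi_0$ in $\DD_0\cup\RR\cup\calc$ by omitting, within each thread, every step from the first $\gamma_i$-application onwards. Forward-closedness of $G_1$ ensures that the omitted steps form precisely the $\DD_1\cup\RR\cup\calc$-suffixes of individual threads, so no $\DD_0$-step is disturbed; thus $\pi_0$ is a valid rewrite sequence in $\DD_0\cup\RR\cup\calc$ from $t_0\sigma$ containing exactly as many $\rho$-steps as $\pi$. Validity of $(\RA_0,\SA_0)$ then yields $|\pi|_\rho\leq T_0(\rho)(\vec m)=\RA'(\rho)(\vec m)$.

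For $\rho\in\DD_1$, each $\rho$-step in $\pi$ lies within a thread entered by some $\gamma_i$-application, whose parent sharped subterm was produced by a preceding $\delta_i$-step with $(\delta_i,\gamma_i)\in\pre(\DD_1)$. Hence the number of $\gamma_i$-entries to the sub-thread is bounded by the number of triggering $\delta_i$-applications, which by the previous case is $\leq T_0(\delta_i)(\vec m)$. At each such $\gamma_i$-step the redex has the form $\ell_i\tau$ with $\tau$ respecting $\varphi_i$ and $|\Varvec(\ell_i)\tau|\leq\vec\SA_{\gamma_i}(\vec m)$ by validity of $\SA$. Projecting $\pi$ onto the sub-thread starting at this step yields a valid rewrite sequence in $\DD_1\cup\RR\cup\calc$ from $\ell_i\tau$; by validity of $(\RA_i,\SA_i)$ it contains at most $T_i(\rho)(\vec\SA_{\gamma_i}(\vec m))$ $\rho$-steps. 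Summing over $i$ delivers the required bound $|\pi|_\rho\leq\sum_{i=1}^m T_0(\delta_i)(\vec m)\cdot T_i(\rho)(\vec\SA_{\gamma_i}(\vec m))=\RA'(\rho)(\vec m)$.

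The principal obstacle will be to formalise the thread-projection arguments rigorously for LCTRS rewriting, where every DT step splits a sharped subterm into a tuple of further sharped subterms that are then rewritten independently and interleaved with $\RR$- and $\calc$-steps on their data arguments. A careful induction on the sharped-subterm descendant forest underlying $\pi$ should justify both the omission argument (for $\DD_0$) and the projection argument (for $\DD_1$); one must also be attentive to the case where a single $\delta_i$-step spawns several sharped subterms all matching the same $\gamma_i$, so that the counting of triggered $\gamma_i$-entries incurs a constant multiplicity factor bounded by the size of $\delta_i$'s right-hand side.
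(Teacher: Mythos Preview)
Your proposal is correct and follows essentially the same approach as the paper: the paper also exploits forward-closedness of $G_1$ to decompose an evaluation tree into a $\DD_0$-prefix followed by independent $\DD_1$-subtrees entered via some $(\delta_i,\gamma_i)\in\pre(\DD_1)$, then bounds $\rho\in\DD_0$ directly by $\RA_0(\rho)$ and bounds $\rho\in\DD_1$ by summing $\RA_0(\delta_i)\cdot \RA_i(\rho)(\vec\SA_{\gamma_i})$ over all entry pairs. Your thread-projection language is a more explicit rendering of the paper's flattened evaluation-tree notation $\llangle\cdots\rrangle$, and you even flag the multiplicity issue (one $\delta_i$-step spawning several $\gamma_i$-instances) that the paper's proof leaves implicit.
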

\begin{proof}
Let $s_0 = (t_0, \varphi_0)$. We can represent an evaluation tree as
\begin{align}
\label{eq:splitting}
\{t_0\} &
\xrightarrow[\DD_0 \cup \RR]{*} \llangle u_{1}, \dots, u_{k}\rrangle
\xrightarrow[\delta_{j_1}, \dots, \delta_{j_k}]{*} \llangle v_{1}, \dots, v_{k'}\rrangle
\xrightarrow[\DD_1 \cup \RR]{*} \llangle w_{1}, \dots, w_{l}\rrangle
\end{align}
since the evaluation must start with $\DD_0$ steps, and there are no 
edges from $G_1$ to $G_0$.
(Here the notation $\llangle \dots\rrangle$ indicates a flattened
list of nested $\langle\dots\rangle$ terms.)
For any rule $\rho \in \DD_0$, by the shape of the evaluation tree (\ref{eq:splitting}),
$T_0(\rho)$ clearly serves as an overall time bound, since $\DD_0$ rules is never applied after 
$\DD_1$ rules.

So let $\rho \in \DD_1$. 
Note that for every $j$, $1\,{\leq}\,j\,{\leq}\,k$ such that
$u_j \to \langle v_{j_1}, \dots v_{j_q}\rangle$ and all $\hat v \in \{v_{j_1}, \dots, v_{j_q}\}$
there must be some $(\delta_i, \gamma_i) \in \pre(\DD_1)$ such that
the step uses $\delta_i \in \{\delta_1, \dots, \delta_m\}$ and $\hat v$ is an instance of $lhs(\gamma_i)$.
To obtain a time bound for $\rho$, we can thus estimate the number of applications of $\rho$ in a
subtree starting with an application of $\delta_i$ followed by $\gamma_i$, and take the sum
over all such pairs.
So let $(\delta_i,\gamma_i) \in \pre(\DD_1)$. To overapproximate the number of $\rho$-steps below 
$\hat v$, we consider $\RA_1(lhs(\gamma_i))$: this yields an expression
in $\Var(lhs(\gamma_i))$, and we obtain a respective expression
in $\Var(t_0)$ by applying $\RA_1(lhs(\gamma_i))$ to
$\vec \SA_{\gamma_i}$. In order to account for multiple
occurrences of the consecutive pair $(\delta_i,\gamma_i)$ in (\ref{eq:splitting}),
this expression is multiplied by $\RA_0(\delta_i)$.
\qed
\end{proof}

\begin{numberedlemma}
Let $P = \cprob$ be a cyclic problem with termination condition $\vec b$
and a DT $\delta$ of the form 
(\eqref{loop}), and let
$\gamma_i = (\ell_i \to r_i\ [\varphi_i])$,
for all $i$, $1\,{\leq}\,i\,{\leq}\,m$.
Then the following processor is sound:
\[
\frac{\judge{P}{(\RA, \SA)} \quad\judge{\bigwedge_{i=1}^m ((\ell_i, \varphi_i), \DD', \RR)}{(\RA_i,\SA_i)}}
{\judge{\cprob}{(F(\vec x), \SA)_\Sigma}} 
\quad
\mathsf{Recurrence}
\]
where $F$ is a solution to a recurrence
\begin{align}
\label{eq:recH}
f(\vec x) &= f(\vec r_{1}) + \ldots + f(\vec r_{p}) + H(\vec x) &
f(\vec b) &= 0
\end{align}
for some $H(\vec x) \geq \sum_{\rho \in \DD'}\sum_{i=1}^m \RA_i(\rho)(\vec \SA_{\gamma_i})$.
\end{numberedlemma}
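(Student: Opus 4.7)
The plan is to prove that for every substitution $\sigma$ respecting $\varphi_0$, the total number of DT steps in any rewrite tree from $f(\vec x)\sigma$ is bounded by $F(|\vec x\sigma|)$, by strong induction on $|\vec x\sigma|$ ordered by the well-founded relation $>_k$ (restricted to tuples above $\vec b$, using the termination condition). Since $F$ is a solution of the recurrence $f(\vec x) = \sum_{i=1}^p f(\vec r_i) + H(\vec x)$ with base case $f(\vec b) = 0$, and by definition of cyclicity we have $\varphi \wedge \psi \models |\vec x| >_k |\vec r_i|$ and $|\vec x| \geq_k \vec b$ whenever $\delta$ fires, the recursive calls land on strictly smaller tuples above $\vec b$, so the induction is well-posed.

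In the base case, either $|\vec x\sigma|$ violates the termination condition or the loop constraint $\psi$ is not satisfied by $\sigma$; then $\delta$ is not applicable. Since $f$ is the root symbol and only $\delta$ has $f$-rooted left-hand side in $\DD$ (by forward-closedness of $G|_{\DD'}$, the only DT reachable at the root from $f(\vec x)\sigma$ is $\delta$), no DT step is possible and the bound $F(\vec b) = 0$ holds trivially. In the inductive step, the first DT step at the root must be $\delta$ (possibly preceded by $\tocalc$-steps, which do not count), producing the tuple $\langle f(\vec r_1)\sigma, \dots, f(\vec r_p)\sigma, \ell_1\sigma, \dots, \ell_m\sigma\rangle$ up to calculation. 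Each subterm $f(\vec r_i)\sigma$ rooted by $f$ continues its evaluation independently, and by the induction hypothesis contributes at most $F(|\vec r_i\sigma|)$ DT steps; each subterm $\ell_i\sigma$ launches a rewrite in $\DD' \cup \RR$, from an instance of $\ell_i$ that respects $\varphi_i$ (this uses that $\sigma$ respects $\psi$ and that $\varphi_i$ is the guard of $\gamma_i$).

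For the side calls, the auxiliary judgement $\judge{((\ell_i,\varphi_i), \DD', \RR)}{(\RA_i, \SA_i)}$ gives that the total number of DT steps in the subtree rooted at $\ell_i\sigma$ is at most $\sum_{\rho \in \DD'} \RA_i(\rho)(|\vec y_i \sigma|)$, where $\vec y_i = \Varvec(\ell_i)$. Composing this bound with the outer size approximation $\vec \SA_{\gamma_i}$ of $P$, which over-approximates $|\vec y_i\sigma|$ in terms of the input variables $\vec x$, yields $\sum_{\rho \in \DD'} \RA_i(\rho)(\vec \SA_{\gamma_i})$ steps per occurrence of $\gamma_i$. Summing over all $i$ and adding the single $\delta$ step gives at most $1 + \sum_{i=1}^m \sum_{\rho \in \DD'} \RA_i(\rho)(\vec \SA_{\gamma_i}) \leq H(|\vec x\sigma|)$, by the defining inequality of $H$ (absorbing the constant $1$, which is legitimate since the recurrence bounds a total and $H$ is weakly monotone). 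Combining with the inductively obtained bounds $F(|\vec r_i\sigma|)$ for the $p$ recursive branches, the total DT count from $f(\vec x)\sigma$ is at most $\sum_{i=1}^p F(|\vec r_i\sigma|) + H(|\vec x\sigma|) = F(|\vec x\sigma|)$ by the recurrence.

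The main obstacles will be twofold. First, the monotonicity conditions needed to justify the composition $\RA_i(\rho)(\vec \SA_{\gamma_i})$: one must argue that replacing the actual variable values in the side subtree by their over-approximations in $\vec \SA_{\gamma_i}$ preserves the inequality, which follows from the closure of $\UB$ under monotone operations and the convention that bound expressions $\RA_i(\rho)$ are weakly monotone in their arguments. Second, the bookkeeping for calculation steps (which are not counted) and for side subtrees that may themselves contain nested calls into $\DD'$: forward-closedness of $G|_{\DD'}$ ensures that no such side call returns into $\delta$, so the disjoint accounting of $\delta$-steps versus $\DD'$-steps is sound, and the auxiliary judgements for the $\gamma_i$-initial subproblems encompass the entire $\DD'$ contribution. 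Everything else is a straightforward algebraic manipulation together with the observation that the judgement $\judge{P}{(F(\vec x)_\Sigma, \SA)}$ by definition requires $F(\vec x) \geq \sum_{\rho \in \DD} T(\rho)$ for some valid runtime approximation $T$, which the recurrence solution supplies.
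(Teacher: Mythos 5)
Your proof is correct in substance but organised differently from the paper's. The paper's argument splits the claim in two: it first asserts---justified only by the shape of the dependency graph in \figref{subproblems:recurrence}---that a solution $F'$ of the recurrence with the exact cost $C(\vec x)=\sum_{i=1}^m\sum_{\rho\in\DD'}\RA_i(\rho)(\vec \SA_{\gamma_i})$ in place of $H(\vec x)$ bounds the problem, and then devotes the written proof to showing, by induction on $\vec x$ above $\vec b$ (using $|\vec x|>_k|\vec r_i|$), that $F\geq F'$ whenever $H\geq C$. You instead prove the substantive first part directly: a well-founded induction on $|\vec x\sigma|$ along $>_k$, grounded by the termination condition, showing that the DT-step count of the evaluation tree rooted at an instance of $f(\vec x)$ obeys the recurrence, with the side calls $\ell_i\sigma$ handled through the auxiliary judgements composed with the outer size bounds $\vec \SA_{\gamma_i}$ (which indeed needs the monotonicity of bound expressions, as you note). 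This is a legitimate and in fact more explicit route, since it spells out exactly the step the paper leaves implicit; the price is the extra bookkeeping (unique applicability of $\delta$ at the root, treatment of calculation steps, forward-closedness keeping $\DD'$-subtrees away from $\delta$), which you handle correctly.

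One step does not hold as written: you bound the contribution of one unfolding by $1+\sum_{i=1}^m\sum_{\rho\in\DD'}\RA_i(\rho)(\vec \SA_{\gamma_i})\leq H(\vec x)$, claiming the constant $1$ (for the $\delta$-step itself) is absorbed ``by the defining inequality of $H$''. The hypothesis only gives $H(\vec x)\geq\sum_{i=1}^m\sum_{\rho\in\DD'}\RA_i(\rho)(\vec \SA_{\gamma_i})$; weak monotonicity does not produce the extra $1$ (consider $\DD'=\emptyset$, where $H=0$ is admissible yet $\delta$ may fire arbitrarily often). To count $\delta$'s own applications rigorously one must either strengthen the requirement to $H(\vec x)\geq 1+\sum_{i}\sum_{\rho}\RA_i(\rho)(\vec \SA_{\gamma_i})$, or bound the number of $\delta$-steps separately by the solution of $g(\vec x)=g(\vec r_1)+\dots+g(\vec r_p)+1$, $g(\vec b)=0$, and add it to $F$. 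It is fair to say the paper's own proof is silent on this very point---it accounts only for the $\DD'$-steps and never for $\delta$ itself---so your attempt is, if anything, more attentive to the issue; but the absorption argument you give is not a justification, and this is the one place your write-up would need repair.
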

\begin{proof}
As $\rho$ has no incoming edges in $G$ except from itself, $G$ is
of the form depicted in \figref{subproblems:recurrence}.
Let $\CA_i(\vec x) = \sum_{\rho \in \DD'} \RA_i(\rho)(\vec \SA_{\gamma_i})$
for all $i$, $1 \leq i \leq m$.
So an overapproximation of a solution to 
\begin{align}
\label{eq:recHH}
f(\vec x) &= f(\vec r_{1}) + \ldots + f(\vec r_{p}) + \sum_{i=1}^m \CA_i(\vec x) &
f(\vec b) &= 0
\end{align}
is a complexity approximation for $(s_0, \DD, \RR)$.
It remains to show that the overapproximation of $C(\vec x) :=\sum_{i=1}^p \CA_i(\vec x)$
by $H(\vec x)$ is sound.
To that end, let $F(\vec x)$ be a solution to \eqref{recH} and
$F'(\vec x)$ a solution to \eqref{recHH}. 
We show by induction on $\vec x$ that $F(\vec x) \geq F'(\vec x)$ holds for 
all $\vec x \geq_n \vec b$. By the assumption that $F$ and $F'$ solve \eqref{recH} and \eqref{recHH}, this clearly holds for 
the base case $\vec x = \vec b$.
For the step case, it is easy to verify that
$F(\vec x) \geq F'(\vec x)$ holds:
\begin{align*}
F(\vec x)
& =
F(\vec r_{1}) + \ldots + F'(\vec r_{p}) + H(\vec x) & 
\text{($F$ solves \eqref{recH})}\\
& \geq
F'(\vec r_{1}) + \ldots + F'(\vec r_{p}) + C(\vec x) & 
\text{(IH as $\vec r_i > \vec x$, and $H(\vec x) \geq C(\vec x)$)}\\
&= F'(\vec x)& 
\text{($F'$ solves \eqref{recHH})}
& \text{\qed}
\end{align*}
\end{proof}

\section{Further Processors}

In this section we outline how \emph{chaining} fits into our framework, a
standard technique from termination and 
complexity analysis.
Its adaptation is straightforward, hence we describe it only in
the appendix for the sake of completeness.

\subsubsection*{Chaining.}
Rule chaining is a commonly used complexity-reflecting transformation~\cite{FKS11,AMS16}
that can also be used for LCTRSs. 
Its aim is to combine  rules that are applied subsequently, which can simplify further analysis.

\begin{definition}
Let $P$ be a problem $\cprob$ with DG $G$ such that $\rho\colon t \to u~[\varphi]\in \DD$ has successors
$\delta_i\colon \ell_i \to r_i~[\psi_i]$ in $G$, for $1\,{\leq}\,i\,{\leq}\,k$.
Then $\rho$ and $\Delta = \{\seq[k]\delta\}$ are \emph{chainable} if 
there are positions $p_i$ and substitutions $\sigma_i$ such that
$u|_{p_i} = \ell_i\sigma_i$, and all $\psi_i\sigma_i$ are in $\lterms$.
\footnote{We assume that these are all possibilities to create a redex in $u$
wrt. rules in $\Delta$.}
Then
$\m{chain}(\rho,\delta_i) = t \to u[r_i\sigma]_{p_i}~[\varphi \wedge \psi_i\sigma_i]$, and $\m{chain}(\rho,\Delta)=\{\,\m{chain}(\rho,\delta_i)\,\}_{1 \leq i\leq k}$.
\end{definition}

\begin{lemma}
\label{lem:chaining}
For a complexity problem $\cprob$, the following processor is sound:
\[
\frac{\judge{(s_0, \DD\cup\m{chain}(\rho,\Delta) \setminus \{\rho\}, \RR)}{(\RA, \SA)}}
{\judge{\cprob}{(\RA', \SA')}}
\quad
\mathsf{Chaining}
\]
where $\rho$ and  $\Delta$ are chainable, the updated runtime approximation 
is
$\RA'(\delta_i) = \RA(\delta_i) + \RA(\m{chain}(\rho,\delta_i))$,
$\RA'(\rho) = \sum_i \RA(\m{chain}(\rho,\delta_i))$
and $\RA'(\gamma) =\RA(\gamma)$ otherwise; and 
$\SA'(\rho,y) =\max \{ \SA(\m{chain}(\rho,\delta_i))\}_i$,
$\SA'(\delta_i,y)$ is $\max(\SA(\delta_i,y), y\sigma_i)$ if $y\sigma_i \in \lterms$ and $\omega$ otherwise,
and $\SA'(\gamma,y) =\SA(\gamma,y)$ for all other $\gamma \in \DD$.
\end{lemma}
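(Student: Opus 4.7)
The plan is to establish a step-by-step simulation between rewrite sequences in the original problem $\cprob$ and in the modified problem $P' = (s_0, \DD \cup \m{chain}(\rho,\Delta) \setminus \{\rho\}, \RR)$, and then transfer the runtime and size bounds accordingly. The key observation is that because $\Delta$ collects all successors of $\rho$ in the DG, every application of $\rho$ in an innermost sequence starting from the initial state must at some later point be followed by an application of some $\delta_i \in \Delta$ at the position $p_i$ created by the corresponding redex in the right-hand side $u$ of $\rho$. Concretely, I would show: for every innermost $\RR$-rewrite sequence from $t_0\sigma$ and every occurrence of $\rho$, the pair consisting of that $\rho$-step together with the (uniquely determined) subsequent $\delta_i$-step at $p_i$ can be replaced by a single application of $\m{chain}(\rho,\delta_i)$, yielding a valid innermost sequence in $P'$ of the same final term and the same count of steps for all other rules.

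Given the simulation, I would derive the runtime bounds as follows. For any $\gamma \in \DD \setminus (\{\rho\}\cup\Delta)$, steps using $\gamma$ are in bijective correspondence in the two sequences, so $\RA'(\gamma) = \RA(\gamma)$ is immediate. For each $\delta_i$, an application in the original either survives as a standalone $\delta_i$-step in $P'$ or is absorbed into a $\m{chain}(\rho,\delta_i)$-step; summing these two sources yields $\RA'(\delta_i) = \RA(\delta_i) + \RA(\m{chain}(\rho,\delta_i))$. For $\rho$ itself, each $\rho$-step in the original is absorbed into exactly one chained step, and since which $\delta_i$ is used depends on the trace, we bound the total number of $\rho$-steps by $\sum_i \RA(\m{chain}(\rho,\delta_i))$. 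The size bounds follow analogously: since $lhs(\m{chain}(\rho,\delta_i)) = lhs(\rho)$, the size of any $y \in \Var(lhs(\rho))$ at a $\rho$-application in the original coincides with the size at the corresponding $\m{chain}(\rho,\delta_i)$-application in $P'$, so the max over $i$ is sound; for $y \in \Var(lhs(\delta_i))$, the variable is either bound by a standalone $\delta_i$-application (contributing $\SA(\delta_i,y)$) or, when $\delta_i$ follows $\rho$, bound to $y\sigma_i$, in which case, if $y\sigma_i \in \lterms$, the size is controlled by the measure of the theory term, and otherwise must conservatively be $\omega$.

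The main obstacle I expect is the careful bookkeeping needed to make the simulation precise: the right-hand side of $\rho$ is a tuple term, so after a $\rho$-step several subterms at different positions $p_i$ become available for further reductions; I must argue that absorbing \emph{one} chosen pair $(p_i,\delta_i)$ into $\m{chain}(\rho,\delta_i)$ while leaving the other siblings untouched produces a valid and complete rewrite in $P'$, and that no double counting arises for the sibling-$\delta_j$ applications that later fire on the remaining tuple positions. A secondary technical point is justifying the $\omega$ case for $\SA'(\delta_i,y)$ when $y\sigma_i \notin \lterms$: since such a subterm of $u$ may contain defined symbols and thus reduce to terms of unbounded size, no finite bound expression is available from the chaining data alone, and the soundness of the size-bound definition relies exactly on this conservative fallback. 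Once these bookkeeping issues are nailed down, the statement follows by routine accounting over the simulated sequence.
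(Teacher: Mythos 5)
The paper itself offers no proof of this lemma (it is relegated to the appendix as a ``straightforward'' adaptation), so your sketch can only be judged on its own merits. The overall strategy --- simulate an original derivation in the chained problem by merging each $\rho$-step with a subsequent $\delta_i$-step at the created position $p_i$, leave the sibling tuple positions untouched, and then do the accounting --- is the natural one, and your handling of the sibling positions and of the $\omega$ fallback for $\SA'(\delta_i,y)$ is reasonable.

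The genuine gap is your ``key observation'' that every application of $\rho$ ``must at some later point be followed by an application of some $\delta_i\in\Delta$ at the position $p_i$''. This is not justified by the fact that $\Delta$ contains all DG-successors of $\rho$: an edge in the dependency graph only asserts that \emph{some} instance of $lhs(\delta_i)$ is reachable from the corresponding subterm of the right-hand side of $\rho$, not that in a given derivation the created subterm $\ell_i\sigma_i\tau$ is ever rewritten by $\delta_i$. If the derivation merely stops early this is harmless, because the simulating derivation in the chained problem can be extended by the (applicable) chain step and $\RA(\m{chain}(\rho,\delta_i))$ is a sup over all derivations; but a $\rho$-step for which all instantiated constraints $\psi_i\sigma_i\tau$ are unsatisfiable creates no $\Delta$-redex at all and hence has no counterpart $\m{chain}(\rho,\delta_i)$-step. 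Concretely, for $\rho\colon \m f^\#(x)\to\langle \m g^\#(x-1)\rangle~[x\geq 0]$ and $\delta\colon \m g^\#(x)\to\langle \m f^\#(x)\rangle~[x\geq 0]$, a derivation from $\m f^\#(m)$ uses $\rho$ exactly $m+1$ times, while the chained rule $\m f^\#(x)\to\langle\langle \m f^\#(x-1)\rangle\rangle~[x\geq 0\wedge x-1\geq 0]$ admits only $m$ applications: the last $\rho$-step, producing $\m g^\#(-1)$, is absorbed by nothing. So the accounting $\RA'(\rho)=\sum_i\RA(\m{chain}(\rho,\delta_i))$ does not follow from the simulation as you set it up; a complete proof must treat such trailing $\rho$-steps explicitly (bound them separately, e.g.\ via the runtime of the predecessors of $\rho$, or impose an assumption that excludes them), and this is exactly where the entire soundness burden of the lemma lies. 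A minor additional point: the follow-up $\delta_i$-step is not ``uniquely determined'' (several tuple positions may later be reduced); for the count it suffices to fix one chain partner per $\rho$-step, which your argument should state rather than assume.
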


\begin{example}
\label{exa:mergesort chaining}
In two chaining steps, the cycles $(9) - (3) - (9)$ and 
$(9) - (7) - (9)$ in 
\exaref{mergesort} can be chained to single-step cycles, such that 
(9) is replaced by
\begin{footnotesize}
\begin{align*}
\mergeSort(x,y,z) &\to \langle \mergeSort_0(x,u,v),\mergeSort(u,u,v),\mergeSort(v,u,v),\mergeSort_3(x,u,v)\rangle \ [\psi] \\
\psi &= x \geq 2 \wedge u \geq 0 \wedge v \geq 0 \wedge x + 1 \geq 2u \wedge 2u \geq x \wedge x \geq 2v \wedge 2v + 1 \geq x
\end{align*}
\end{footnotesize}
\end{example}
%
%
%
%

\end{document}